\documentclass[a4paper]{article}

\usepackage[utf8]{inputenc}
\usepackage{mathtools}%Coloneqq
\usepackage{amsfonts}%Caractéres matemáticos.
\usepackage{amsthm}
\usepackage{amsmath}
\usepackage{amssymb}
\usepackage{enumitem} % customizable list environments
\usepackage{graphicx}
\usepackage{subcaption}
\usepackage{mathrsfs}
\usepackage{a4wide}
\usepackage[text,roman]{esdiff}

\usepackage{tikz}
\usetikzlibrary{decorations.markings}
\tikzstyle{vertex}=[circle, draw, inner sep=0pt, minimum size=3pt, fill]

%Def teoremas
\usepackage{thmtools}

\theoremstyle{plain}
\newtheorem{theorem}{Theorem}[section]
\newtheorem{proposition}[theorem]{Proposition}
\newtheorem{lemma}[theorem]{Lemma}
\newtheorem{corollary}[theorem]{Corollary}
\theoremstyle{definition}
\newtheorem{definition}[theorem]{Definition}

\numberwithin{equation}{section}

%Abreviaciones
\renewcommand{\l}{\ell}

\newcommand{\N}{\mathbb{N}}
\newcommand{\Q}{\mathbb{Q}}
\newcommand{\R}{\mathbb{R}}
\newcommand{\C}{\mathbb{C}}
\newcommand{\DD}{\mathcal{D}}

\renewcommand{\H}{\mathcal{H}}

\newcommand{\U}{\mathcal{U}}

\renewcommand{\d}{\mathrm{d}}

\DeclareMathOperator{\dom}{dom}

\DeclareMathOperator{\diag}{diag}

\newcommand{\abs}[1]{\left| #1 \right|}
\newcommand{\norm}[1]{\left\Vert #1 \right\Vert}
\newcommand{\scalar}[2]{\langle #1, #2 \rangle}

\newcommand{\carrow}[1]{\overset{\curvearrowright}{#1}}
%vertex evaluations
\newcommand{\veval}[1]{\underline{#1}}
\newcommand{\oveval}[1]{\carrow{\underline{#1}}}
%derivatives (using esdiff)
\def\D{\diff}

\newcommand{\Dt}[1][]{\D[#1]{}{t}}
\newcommand{\Dx}[1][]{\D[#1]{}{x}}

%%%%%%%%%%%%%%%%%%%%%%%%%%%%%%%%%%%%%%%%%%%%%%%%%%%%%%%%%%%%%%%%%%%%%%%%%%%%%%%%%%%%%%%%%%%%%%%%%%%%%%%%%%%%%%%%%%%%%%%%%%%%%%%%%%%%%%%%%%%%%%%%%%%%%%%%%%%%%%%%%%%%%%%%%%%%%%%%%%%%%%%%%%%%%%%%%%%%%%%%%%%%%%%%%%%%%%%%%%%%%%%%%%%%%%%%%%%%%%%%%%%%%%%%%%%%%%%%%%%%%%%%%%%%%%%%%%%%%%%%%%%%%%%%%%%%%%%%%%%%%%%%%%%%%%%%%%%%%%%%%%%%

\title{Quantum Control at the Boundary}
\author{A. Balmaseda and J.M. Pérez-Pardo}
\date{\today}

%%%%%%%%%%%%%%%%%%%%%%%%%%%%%%%%%%%%%%%%%%%%%%%%%%%%%%%%%%%%%%%%%%%%%%%%%%%%%%%%%%%%%%%%%%%%%%%%%%%%%%%%%%%%%%%%%%%%%%%%%%%%%%%%%%%%%%%%%%%%%%%%%%%%%%%%%%%%%%%%%%%%%%%%%%%%%%%%%%%%%%%%%%%%%%%%%%%%%%%%%%%%%%%%%%%%%%%%%%%%%%%%%%%%%%%%%%%%%%%%%%%

\begin{document}

\maketitle

\abstract{%
We introduce a scheme for controlling the state of a quantum system by manipulating its boundary conditions. This contrasts with the usual approach based on direct interactions with the system, that is, by adding interaction terms to the Hamiltonian of the system. We address this infinite-dimensional control problem, providing conditions to the existence of dynamics and approximate controllability for a family of quantum one-dimensional systems.
}

%%%%%%%%%%%%%%%%%%%%%%%%%%%%%%%%%%%%%%%%%%%%%%%%%%%%%%%%%%%%%%%%%%%%%%%%%%%%%%%%%%%%%%%%%%%%%%%%%%%%%%%%%%%%%%%%%%%%%%%%%%%%%%%%%%%%%%%%%%%%%%%%%%%%%%%%%%%%%%%%%%%%%%%%%%%%%%%%%%%%%%%%%%%%%%%%%%%%%%%%%%%%%%%%%%%%%%%%%%%%%%%%%%%%%%%%%%%%%%%%%%%

\section{Introduction}
The development of quantum technologies is of full demanding challenges. From a technological point of view there is the difficulty of manipulating coherently quantum systems made of few particles while maintaining the quantum correlations. This implies that quantum systems have to be kept under very low temperatures and interaction with them has to be performed very fast in order to avoid decoherence \cite{Bacciagaluppi2016}.

A basic requirement for an effective quantum information processing system, quantum sensor or simulator is the ability to control the quantum state of the system at the individual level. The control of quantum spin systems can be addressed by means of the geometrical control theory. For instance, Khaneja et al.\ showed that finding sub-Riemannian geodesics on a quotient space of $SU(4)$ allows to obtain RF pulse trains for two-spin and three-spin NMR systems \cite{khaneja_sub-riemannian_2002} and also studied the associated numerical implementations \cite{khaneja_optimal_2005}. Also in \cite{moseley_geometric_2004} they study the control of quantum spin systems using a geometrical control approach, and the optimal control problem for blocks of quantum algorithms has been considered in \cite{schulte-herbruggen_optimal_2005} (see also \cite[Chps. 5,6]{dalessandro_introduction_2007}, \cite{bonnard_review_2012}, \cite{glaser_training_2015} and references therein for more recent reviews on geometric quantum control). Other recent approach to optimal control of coupled spin systems is described in \cite{delgado-tellez_optimal_2016}.

However, geometric control theory and its extension to optimal control problems suffers serious drawbacks when extended to genuine infinite dimensional quantum systems. Mainly because of the intrinsic  mathematical difficulties of infinite-dimensional geometry. Nevertheless, it has been applied to the finite dimensional approximations used to model the aforementioned quantum devices such as Ion Traps, NMR quantum computers and others. One of the sources of decoherence comes precisely by the neglection of the highest energy levels in order to perform the finite-dimensional approximations \cite{Garg99}.

There are not many results on controllability of infinite dimensional systems (see, e.g.\ \cite{beauchard_local_2005,beauchard_controllability_2006}, \cite{chambrion_controllability_2009} and references therein). As an alternative method to geometric control theory, there is the use of quadrature operators. There the usual approach is to associate quadrature operators to the problem and study the control of their dynamics (see for instance \cite{agarwal_quantum_2012}, \cite{carlini_time-optimal_2014}, \cite{genoni_optimal_2013}).

The quantum control at the boundary (QCB) method is a radically different approach to the problem of controlling the state of a qubit. Instead of seeking the control of the quantum state by directly interacting with it using external magnetic or electric fields, the control of the state will be achieved by manipulating the boundary conditions of the system. The spectrum of a quantum system, for instance an electron moving in a box, depends on the boundary conditions imposed on it. The typical situation is to consider either Dirichlet or Neumann boundary conditions. A modification of such boundary conditions modifies the state of the system allowing for its manipulation and, eventually, its control \cite{ibort_quantum_2010}. Addressing the problem from the genuine infinite dimensional setting provides a natural way of avoiding sources of decoherence.

The QCB paradigm has been used to show how to generate entangled states in composite systems by suitable modifications of the boundary conditions \cite{ibort_boundary_2014}. The relation of QCB and topology change has been explored in \cite{perez-pardo_boundary_2015} and recently used to describe the physical properties of systems with moving walls (\cite{facchi_moving_2016}, \cite{facchi_boundaries_2018}, \cite{facchi_quantum_2018}, \cite{facchi_self-adjoint_2018}, \cite{garnero_quantum_2018}), but in spite of its intrinsic interest some basic issues such as the QCB controllability of simple systems has never been addressed.

In developing the theory it will be shown first, by means of a suitable chosen time-dependent unitary transformation, that the variation of the boundary conditions of the system can be implemented as a time-dependent family of Hamiltonian operators, an idea that was already anticipated in \cite{perez-pardo_boundary_2015}. The particular instance of quasi-periodic boundary conditions will be worked out explicitly and it will be shown that the system reduces to a linear system similar to those studied by Chambrion et al.\ \cite{chambrion_controllability_2009}.

This article is organised as follows. In Section~\ref{sec:controllability} we review the notions of controllability in quantum systems that we will need to address the problem. The main difficulties will also be presented. In Section~\ref{sec:magneticlaplacian} we introduce the magnetic Laplacian. This provides a simple model where we will be able to implement the scheme of QCB and prove controllability rigorously. Section~\ref{sec:dynamics} and Section~\ref{sec:approximatecontrol} are devoted respectively to prove the well-posedness of the dynamics in the particular system of quantum control at the boundary considered and its approximate controlability.

%%%%%%%%%%%%%%%%%%%%%%%%%%%%%%%%%%%%%%%%%%%%%%%%%%%%%%%%%%%%%%%%%%%%%%%%%%%%%%%%%%%%%%%%%%%%%%%%%%%%%%%%%%%%%%%%%%%%%%%%%%%%%%%%%%%%%%%%%%%%%%%%%%%%%%%%%%%%%%%%%%%%%%%%%%%%%%%%%%%%%%%%%%%%%%%%%%%%%%%%%%%%%%%%%%%%%%%%%%%%%%%%%%%%%%%%%%%%%%%%%%%

\section{Control of Quantum Systems}\label{sec:controllability}

As stated in the introduction, one of the main objectives of the research presented in this article is to show that the paradigm of quantum control at the boundary is feasible. That is, we will prove controllability, in the sense that we are going to introduce later in this section, of a quantum system by means of modifications of the boundary conditions. Before doing that, let us review briefly some important concepts of the standard theory of control.

To fix the ideas in the context of Quantum Control, cf. \cite{dalessandro_introduction_2007}, consider the following setting. The space of pure states is given by the complex projective space , $\mathcal{P}(\H)$, of the separable Hilbert space  $\H$, \cite{GKM05, EMM10, CIMM15}. In what follows we will denote the norm and scalar product of the Hilbert space by the usual notation, i.e. $\norm{\cdot}$ and $\scalar{\cdot}{\cdot}$ respectively.

Evolution in a quantum system is governed in general by the time-dependent Schrödinger Equation. The final purpose of Control Theory is to study how to introduce an interaction into a system in order to be able to drive the state of the system from a given initial state to a desired target state. A simple, yet convenient, setting to define quantum control is to consider a time-dependent Hamiltonian of the form
\begin{equation}\label{eq:affinecontrol}
    H(t) = H_0 + \sum_{i= 1}^n f_i(t) H_i\,,
\end{equation}
where $H_0$ and $H_i$ are self-adjoint operators on the Hilbert space and where $f_i(t)\in \mathcal{C}$ are one variable functions on a convenient space of functions. The latter has to be specified and depends on the particular problem that one wants to address. Since Control Theory is devised ultimately to be applied to some concrete experimental setting, the limitations or restrictions to be imposed on the family of controls $\mathcal{C}$ will come from the experimental setup. For simplicity let us consider for the moment that $\mathcal{C} \equiv \mathcal{C}^\infty(\R)$, the space of smooth and real-valued functions. Given an initial state $\Psi_0\in\mathcal{P}(\H)$ and a target state $\Psi_T\in\mathcal{P}(\H)$, the problem of controllability consists on determining if there exists a choice of the functions $f_i(t) \in \mathcal{C}$ such that the solution of the time-dependent Schrödinger Equation is such that the initial state $\Psi_0$ is driven to the target state $\Psi_T$ in a time $T>0$. In order to give a more precise definition of controllability let us introduce the reachable set.

\begin{definition}
    Let $\Psi_0\in\mathcal{P}(\H)$, $f_i(t)\in \mathcal{C}$, $i= 1,\dots,n$ and let $\Psi(t)$ be the solution of the time dependent Schrödinger Equation
    $$i\frac{\d}{\d t}\Psi(t) = H(t) \Psi(t).$$
    The \textbf{reachable set} $\mathcal{R}_{\Psi_0}(T)$ of the state $\Psi_0$ at time $T\in\R$ is defined to be
    $$\mathcal{R}_{\Psi_0}(T) = \left\{  \Psi\in\mathcal{P}(\H)  \mid  \Psi = \Psi(t), t<T\in\R, \Psi(0) = \Psi_0, f_i(t)\in\mathcal{C}, i=1,\cdots,n \right\}.$$
\end{definition}
That is, the reachable set of the state $\Psi_0$ is the set of all those states that can be accessed starting at the state $\Psi_0$ under all the possible evolutions described by the family of controls. We postpone until later in this section the considerations of existence of solutions of this time-dependent Schrödinger Equation. For the definition of reachable set it is implicitly assumed that the initial value problem is well-posed. We are ready now to define the notion of exact controllability.
\begin{definition}
    Let $\mathcal{C}$ be a family of controls, let the quantum system defined by the space of states $\mathcal{P}(\H)$ and evolution determined by the time-dependent Schrödinger Equation
    $$i\frac{\d}{\d t}\Psi(t) = H(t) \Psi(t),$$
    with Hamiltonian $H(t) = H_0 + \sum_{i= 1}^n f_i(t) H_i$, $f_i(t) \in \mathcal{C}$, $i = 1,\cdots,n$. The quantum system is said to be \textbf{exactly controllable} if for all $\Psi_0\in \mathcal{P}(\H)$ one has that
    $$\bigcup_{T\in\R}\mathcal{R}_{\Psi_0}(T) = \mathcal{P}(\H).$$
\end{definition}
This notion is also called in the literature pure state controllability. We are only interested in the evolution of pure states $\Psi \in \mathcal{P}(\H)$, in contrast to the more general density states.

Let us say now that the problem of controllability is a problem of existence of controls such that any target state can be achieved. The problem of (optimal) determination of the controls will not be considered here.

In general, the quantum systems are defined on infinite-dimensional Hilbert spaces. Moreover, typically the Hamiltonians are unbounded operators acting on the Hilbert space of the system. Unbounded operators are not continuous operators on the Hilbert space and therefore, existence of solutions of the time-dependent Schrödinger Equation is compromised. For instance, the domains of the operators may depend also on  time, and the range of the operators may not preserve the domains. These facts introduce a set of stringent conditions on the families of available Hamiltonians, and thus of available controls, in order to define well-posed control problems. One of the aims of this article is to show that the setting of quantum control at the boundary is feasible. In particular this implies guaranteeing the existence of solutions of the evolution equation.

Quantum systems of mechanical type are governed by Hamiltonians defined by differential operators on Riemannian manifolds. Typically, the Laplace-Beltrami operator or other second order differential operator related to it. If the Riemannian manifold has boundaries those operators are in general symmetric operators but not self-adjoint, cf. \cite{reed_methods_1975} or \cite{ibort_self-adjoint_2015b} and references therein for an introduction to the topic. Each self-adjoint extension describes a different physical situation. Consider, for example, the case of the Laplace operator on a compact interval. One can consider Dirichlet boundary conditions or Neumann boundary conditions. These two operators define two completely different self-adjoint extensions of the same operator and thus describe completely different evolutions. The space of self-adjoint extensions of a symmetric, second order differential operator (in any dimension) can be characterised by certain families of boundary conditions, cf. \cite{Grubb1968, ibort_self-adjoint_2015} and references therein.

We will consider the use of these spaces of boundary conditions as spaces of controls. This idea was firstly introduced in \cite{ibort_quantum_2010}. The appearance of the controls in the Hamiltonian are now more subtle than they are in Eq.~\eqref{eq:affinecontrol} since they will not appear directly in the functional form of the operator, but will appear in the boundary conditions that define the different domains of the operators at every instant of time. That is, we are going to consider families of Hamiltonians $\left(H, \DD(f_i)\right)$, $f_i(t) \in \mathcal{C}$, where the space of controls $\mathcal{C}$ is now the space of self-adjoint extensions (or a subset of it) of the symmetric operator $H$.

From these previous considerations it follows that the setting of quantum control at the boundary requires of infinite-dimensional Hilbert spaces and unbounded operators. Unfortunately, the usual notions of control introduced at the beginning of this section are not suitable to handle the infinite dimensional situation. In particular, they turn out be too strict and there is the need to introduce a notion of controllability that is slightly weaker. Consider the quantum control system defined by the Harmonic oscillator over the real line
    $$H_0 = -\frac{1}{2}\frac{\d^2}{\d x^2} + \frac{1}{2} x^2\,,\quad H_1 = x\,, \quad \mathcal{C}\equiv \mathcal{C}^\infty(t)\,,$$
such that $H(t) = H_0 + f(t)H_1$, $f(t) \in \mathcal{C}$. This quantum control system is not exactly controllable, see for instance \cite{mirrahimi_controllability_2004}. However, every finite dimensional truncation up to the first $n$ lowest eigensates, whose Hamiltonians are now given by Hermitean matrices \hbox{$\tilde{H}_0,\, \tilde{H}_1 \in M(\C)^{n\times n}$} is exactly controllable, \cite{ramakrishna_controllability_1995}. This situation motivates the definition of approximate controllability.
\begin{definition}
    Let $\Psi_0, \Psi_T \in \mathcal{P}(H)$. Let $B_\epsilon(\Psi_T)$ be the ball of radius $\epsilon>0$ centred at $\Psi_T$. We will say that a quantum system is \textbf{approximately controllable} if for every $\epsilon>0$ there exists a $T>0$ such that
    $$\mathcal{R}_{\Psi_0}(T)\cap B_\epsilon(\Psi_T) \neq \emptyset.$$
\end{definition}
That is, a quantum system is approximately controllable if there is a finite time $T$ such that the reachable set $\mathcal{R}_{\Psi_0}(T)$ of the state $\Psi_0$ intersects with a neighbourhood of radius $\epsilon$ of the state $\Psi_T$. Therefore, one can come as close to the target state $\Psi_T$ as desired. It is remarkable that approximate controllability has been proven , cf. \cite{chambrion_controllability_2009}, for linear systems with one control ($n= 1$) under suitable assumptions on the spectral properties of the operators $H_0$ and $H_1$. On Section~\ref{sec:approximatecontrol} we will rely on that result to prove controllability for a particular instance of quantum control at the boundary. We should mention here that more general notions of controllability, suitable for quantum systems, are also possible, e.g. \cite{IPP09}.

%%%%%%%%%%%%%%%%%%%%%%%%%%%%%%%%%%%%%%%%%%%%%%%%%%%%%%%%%%%%%%%%%%%%%%%%%%%%%%%%%%%%%%%%%%%%%%%%%%%%%%%%%%%%%%%%%%%%%%%%%%%%%%%%%%%%%%%%%%%%%%%%%%%%%%%%%%%%%%%%%%%%%%%%%%%%%%%%%%%%%%%%%%%%%%%%%%%%%%%%%%%%%%%%%%%%%%%%%%%%%%%%%%%%%%%%%%%%%%%%%%%

\section{Magnetic Laplacian} \label{sec:magneticlaplacian}
    During the rest of this work we will concentrate in one particular class of quantum systems, namely magnetic Laplacians in one dimension. The reason behind this choice is twofold. On one hand these systems are simple enough such that we will be able to prove rigorously the existence of dynamics and to address the boundary controllability problem. On the other hand, this simple system can be implemented physically, thus opening an interesting path to devise applications of the scheme of quantum control at the boundary to quantum computation and quantum information. Let $L\subset\R$ be a compact interval that for conveniece we will consider to be $L = [0,l]$. The Hamiltonian of the magnetic Laplacian takes the form:
        \[ H = -\left( \frac{d}{dx} - i A(x) \right)^2 \eqqcolon -D^2, \]
        where $A\in\mathcal{H}^1(L)$ is a function in the Sobolev space of order 1 and is called the magnetic potential.

    From its definition it can be seen the similarity of this Hamiltonian with the Laplace operator. This operator describes the so called \emph{minimal coupling} of an electrically charged particle with a magnetic potential. This justifies the name of magnetic Laplacian. It is a second order differential operator and we need to determine a domain for it in order to have it well defined. Following \cite{Kochubei1975,asorey2005global,ibort_self-adjoint_2015} we will identify the domains of self-adjointness by looking for maximal domains where the boundary term of Green's formula vanishes identically. This boundary term reads in this case:

            \[
            i{\Sigma}(\Phi,\Psi) \coloneqq i( \langle \Psi, D^2 \Phi \rangle - \langle D^2 \Psi, \Phi \rangle)
            = \langle \veval{\Psi} + i\oveval{D \Psi}, \veval{\Phi} + i\oveval{D \Phi} \rangle_{\partial L} - \langle \veval{\Psi} - i\oveval{D \Psi}, \veval{\Phi} - i\oveval{D \Phi} \rangle_{\partial L}.
            \]
    The underline notation stands for restrictions to the boundary, while the arrows over the symbols mean that the restriction to the boundary is taken having into account the orientation. That is, derivatives are taken with orientation pointing outwards to the boundary as well as the restriction to the boundary of the potentials (they are one-forms evaluated on the normal vector to the boundary). The subindex $\partial L$ means that it is considered the scalar product of the Hilbert space induced at the boundary of $L$. Therefore, cf. \cite{asorey2005global}, the self-adjoint extensions of $D^2$ are parametrized by an unitary operator $U \in \U(\mathcal{L}^2(\partial L))$ with
            \[
                \dom D^2_U= \{\Phi\in \mathcal{H}^2(L) :
                \veval{\Phi} - i \oveval{D\Phi} = U (\veval{\Phi} + i \oveval{D\Phi})\},
            \]
            where $\mathcal{H}^2(L)$ is the Sobolev space of order 2.

            It is going to be convenient for the next section to keep in mind the following well-known property about magnetic Laplacians (see, e.g., \cite{kostrykin2003quantum} for a more detailed study of this properties).

    \begin{proposition}\label{prop:magnetic-Laplacian-equiv}
        Let $D^2_U$ be a self-adjoint extension of the magnetic Laplacian associated to a vector potential $A$. Then, for any $\tilde{A}$ there exists a self-adjoint extension of the associated magnetic Laplacian, $\tilde{D}^2_V$, and an isometry $T$ on $\mathcal{L}^2(L)$ mapping $\dom \tilde{D}^2_V$ into $\dom D^2_U$ such that
        \[ T^{-1} D^2_U T = \tilde{D}^2_V. \]
        Moreover, $V = \veval{T}^{-1}U\veval{T}$ with $\veval{T}$ the restriction to the boundary of $T$,i.e. the operator $\veval{T} : \mathcal{L}^2(\partial L) \mapsto \mathcal{L}^2(\partial L)$ such that $\veval{T}\,\veval{\Phi} = \veval{T\Phi}$ for any $\Phi\in\mathcal{H}^2(L)$.
    \end{proposition}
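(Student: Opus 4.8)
The plan is to realise $T$ as a gauge transformation. Given the two vector potentials $A,\tilde A\in\mathcal{H}^1(L)$, set
\[ \phi(x) := \int_0^x\big(A(s)-\tilde A(s)\big)\,\d s , \]
which is of class $\mathcal{H}^2(L)$: on the compact interval $L$ one has the embedding $\mathcal{H}^1(L)\hookrightarrow\mathcal{C}(L)$, so $\phi'=A-\tilde A\in\mathcal{H}^1(L)$ and $\phi''=A'-\tilde A'\in\mathcal{L}^2(L)$. Define $T\Psi:=e^{i\phi}\Psi$. Since $\phi$ is real-valued, $\abs{e^{i\phi}}\equiv 1$, so $T$ is an isometry (indeed a unitary) of $\mathcal{L}^2(L)$; and since multiplication by an $\mathcal{H}^2(L)$ function preserves $\mathcal{H}^2(L)$ and $\mathcal{H}^1(L)$ (again by the one-dimensional Sobolev embedding), $T$ restricts to a bijection of each of these spaces onto itself, with inverse the multiplication by $e^{-i\phi}$.

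Next I would establish the intertwining at first order. Writing $D=\tfrac{\d}{\d x}-iA$ and $\tilde D=\tfrac{\d}{\d x}-i\tilde A$, the Leibniz rule gives, for $\Psi\in\mathcal{H}^1(L)$,
\[ D(T\Psi)=\big(\tfrac{\d}{\d x}-iA\big)(e^{i\phi}\Psi)=e^{i\phi}\big(\tfrac{\d}{\d x}+i\phi'-iA\big)\Psi=e^{i\phi}\big(\tfrac{\d}{\d x}-i\tilde A\big)\Psi=T\,\tilde D\Psi , \]
so $DT=T\tilde D$, and iterating, $D^2T=T\tilde D^2$ as operators $\mathcal{H}^2(L)\to\mathcal{L}^2(L)$. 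I then define
\[ \tilde D^2_V:=T^{-1}D^2_U\,T ,\qquad \dom\tilde D^2_V:=T^{-1}\big(\dom D^2_U\big) , \]
which is dense in $\mathcal{L}^2(L)$ and contained in $\mathcal{H}^2(L)$, and on which $\tilde D^2_V$ acts as $\tilde D^2$ by the intertwining. It remains to recognise $\dom\tilde D^2_V$ as one of the boundary-condition domains of the $\tilde A$-magnetic Laplacian and to read off $V$.

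To this end I would track the boundary data through $T$. Let $\veval{T}$ be the restriction of $T$ to $\partial L$, i.e.\ multiplication by $e^{i\phi}|_{\partial L}$; under $\mathcal{L}^2(\partial L)\cong\C^2$ this is the unitary $\diag\!\big(e^{i\phi(0)},e^{i\phi(l)}\big)$, and $\veval{T\Psi}=\veval{T}\,\veval{\Psi}$ directly from the definition of the trace of a multiplication operator. Moreover, from $D(T\Psi)=T\,\tilde D\Psi$ and the fact that the orientation sign attached to the outward normal derivative at the two endpoints is a diagonal operator on $\mathcal{L}^2(\partial L)$, hence commutes with the diagonal operator $\veval{T}$, one obtains $\oveval{D(T\Psi)}=\veval{T}\,\oveval{\tilde D\Psi}$. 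Substituting both identities into the defining relation $\veval{T\Psi}-i\oveval{D(T\Psi)}=U\big(\veval{T\Psi}+i\oveval{D(T\Psi)}\big)$ characterising $T\Psi\in\dom D^2_U$, and cancelling the invertible factor $\veval{T}$ on the left, yields
\[ \veval{\Psi}-i\oveval{\tilde D\Psi}=\veval{T}^{-1}U\veval{T}\big(\veval{\Psi}+i\oveval{\tilde D\Psi}\big) . \]
Thus $\Psi\in\dom\tilde D^2_V$ if and only if $\Psi\in\mathcal{H}^2(L)$ satisfies the boundary condition associated to $V:=\veval{T}^{-1}U\veval{T}$, which is unitary as a product of unitaries; by the parametrisation of the self-adjoint extensions recalled above (applied to $\tilde D$ in place of $D$), $\tilde D^2_V$ is exactly the self-adjoint extension of the $\tilde A$-magnetic Laplacian determined by $V$, and $T$ maps $\dom\tilde D^2_V$ onto $\dom D^2_U$ with $T^{-1}D^2_U T=\tilde D^2_V$ and $V=\veval{T}^{-1}U\veval{T}$.

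The computations are otherwise routine; the one step that warrants care is the orientation bookkeeping in $\oveval{D(T\Psi)}=\veval{T}\,\oveval{\tilde D\Psi}$, i.e.\ checking that passing the multiplication operator $T$ through the oriented trace $\oveval{\cdot}$ produces precisely $\veval{T}$ with no extra factor — which rests on $\veval{T}$ being diagonal in the endpoint decomposition of $\mathcal{L}^2(\partial L)$ and thus commuting with the outward-orientation signs — together with the minor point that $T$ and $T^{-1}$ preserve $\mathcal{H}^2(L)$, so that $\dom\tilde D^2_V\subset\mathcal{H}^2(L)$ and $\tilde D^2_V$ genuinely is a restriction of $\tilde D^2$.
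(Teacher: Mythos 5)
Your proposal is correct and follows essentially the same route as the paper: both realise $T$ as the gauge transformation $\Psi\mapsto e^{i\chi}\Psi$ with $\chi'=A-\tilde A$, use the Leibniz-rule intertwining $DT=T\tilde D$, and push the boundary data through $T$ to read off $V=\veval{T}^{-1}U\veval{T}$. Your write-up is somewhat more explicit than the paper's (which leaves the boundary-condition bookkeeping as "straightforward"), but there is no difference in substance.
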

    \begin{proof}
        As we already said, the magnetic vector potential is taken to be continuous and therefore, by the Poincaré Lemma, there exists $\chi: L \to \R$ differentiable such that $\chi' = A - \tilde{A}$. Let $T$ denote the multiplication map defined by
        \begin{equation}\label{eq:def-T}
            T: \Phi \in \mathcal{L}^2(L) \mapsto e^{i \chi } \Phi  \in \mathcal{L}^2(L).
        \end{equation}
        It follows directly from this definition that $T$ is an isometry on $\mathcal{L}^2(L)$. Using the product rule, it is easy to check that
        \[
            \left(\frac{d}{dx} - iA(x)\right) T \Psi = T\left(\frac{d}{dx} - i\tilde{A}(x)\right)\Psi.
        \]
        Evaluating at the boundary, it follows
        \begin{equation*}\label{eq:prop1.1-1}
            \veval{D_U T \Psi} = \veval{T \tilde{D}_V \Psi} = \veval{T\vphantom{\Psi}}\,\veval{\tilde{D}_V \Psi},
        \end{equation*}
        where $\veval{T}$ is the diagonal matrix $\veval{T} = \diag(\{e^{i\underline{\chi} (v)}\}_{ v\in \partial L})$.

        Using this, it is straightforward to show that for any $\Phi \in \dom D^2_U$, $\Psi = T^{-1}\Phi$ is in $\dom \tilde{D}^2_V$, if $V = \veval{T}^{-1}U\veval{T}$. Moreover,
        \[ D^2_U T \Psi = T \tilde{D}^2_{\veval{T}^{-1}U\veval{T}}, \]
        which concludes the proof.
    \end{proof}

    As a consequence of this property, we show the following result which will allow us to consider constant vector potentials.

    \begin{corollary}\label{corol:magnetic-Laplacian-constant}
         Every self-adjoint extension of a magnetic Laplacian $D^2_U$, associated with a potential $A$, is equivalent to one associated with a constant potential $\tilde{A}$ such that $\dom D^2_U = \dom \tilde{D}^2_U$.
    \end{corollary}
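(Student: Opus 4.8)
The plan is to apply Proposition~\ref{prop:magnetic-Laplacian-equiv} with the constant potential taken to be the \emph{mean value} of $A$ over the interval. This is precisely the choice that forces the accompanying gauge function to vanish at both endpoints, hence forces the boundary restriction of the intertwining isometry to be the identity, and therefore leaves the labelling unitary $U$ untouched.

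First I would note that, since $A\in\H^1(L)$ and $L=[0,l]$ is a compact interval, $A$ is continuous, so $\tilde{A}\coloneqq\frac{1}{l}\int_0^l A(x)\,\d x$ is a well-defined real constant. Then I would set $\chi(x)\coloneqq\int_0^x\bigl(A(s)-\tilde{A}\bigr)\,\d s$, so that $\chi\in\H^2(L)$ with $\chi'=A-\tilde{A}$; thus $\chi$ is exactly the primitive produced by the Poincaré Lemma in the proof of Proposition~\ref{prop:magnetic-Laplacian-equiv}, and multiplication by $e^{i\chi}$ preserves $\H^2(L)$. The reason for choosing $\tilde{A}$ to be the average is that this particular primitive vanishes at \emph{both} boundary points: $\chi(0)=0$ is immediate, and $\chi(l)=\int_0^l A(s)\,\d s-\tilde{A}\,l=0$ by the definition of $\tilde{A}$.

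Next I would invoke Proposition~\ref{prop:magnetic-Laplacian-equiv} for $D^2_U$ with this $\tilde{A}$: it produces the isometry $T\colon\Phi\mapsto e^{i\chi}\Phi$ of $\L^2(L)$ and a self-adjoint extension $\tilde{D}^2_V$ of the magnetic Laplacian with constant potential $\tilde{A}$ such that $T^{-1}D^2_U T=\tilde{D}^2_V$ and $V=\veval{T}^{-1}U\veval{T}$. Since $\chi(0)=\chi(l)=0$, the boundary restriction is $\veval{T}=\diag\bigl(e^{i\chi(0)},e^{i\chi(l)}\bigr)=\mathrm{id}$ on $\L^2(\partial L)$, whence $V=U$; this is the asserted equivalence of $D^2_U$ with $\tilde{D}^2_U$. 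For the domains, $T$ maps $\dom\tilde{D}^2_U$ onto $\dom D^2_U$ by the Proposition, and because $\veval{T}=\mathrm{id}$ it leaves the boundary data unchanged: for $\Psi\in\dom\tilde{D}^2_U$ one has $\veval{T\Psi}=\veval{\Psi}$ and $\oveval{D(T\Psi)}=\oveval{T\tilde{D}\Psi}=\oveval{\tilde{D}\Psi}$, so $T\Psi$ satisfies the boundary condition attached to the same unitary $U$; in this sense the two domains are identified through $T$ and both operators are the self-adjoint extension labelled by $U$.

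I do not expect a genuine obstacle here, since the statement is in essence a bookkeeping consequence of Proposition~\ref{prop:magnetic-Laplacian-equiv}. The one point that must not be missed is that \emph{only} the mean-value choice of $\tilde{A}$ makes $\chi$ vanish on $\partial L$, and hence $\veval{T}$ trivial and $V=U$; a generic constant $\tilde{A}$ would give $\veval{T}$ a nontrivial diagonal unitary and $V$ merely a (generally different) unitary conjugate of $U$. The remaining verifications — continuity of $A$, $\chi\in\H^2(L)$, and that $T$ is an isometry intertwining the two domains — are routine.
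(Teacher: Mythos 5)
Your proposal is correct and follows essentially the same route as the paper: choose $\tilde{A}$ to be the mean value of $A$, take $\chi(x)=\int_0^x(A-\tilde{A})$, observe that $\chi$ vanishes at both endpoints so that $\veval{T}=\mathbb{I}_{2\times 2}$, and conclude via Proposition~\ref{prop:magnetic-Laplacian-equiv}. Your added remarks (continuity of $A$ from $\H^1(L)$ on a compact interval, and the explicit check that only the mean-value choice trivialises $\veval{T}$) are correct elaborations of the same argument, not a different one.
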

    \begin{proof}
         Let $l$ denote the length of the interval, i.e.\ $\int_L\d x = l$.
         Take $\tilde{A}  = l^{-1} \int_L A (x) \, \d x$ and $\chi (x) = \int_0^{x} (A (x) - \tilde{A} ) \, \d x$. Define
             \[ \tilde{D}^2_U = \left(\frac{d}{dx} - i \tilde{A}\right)^2 \]
         and $T$ as in Proposition~\ref{prop:magnetic-Laplacian-equiv}; it follows that $\veval{T} = \mathbb{I}_{2\times 2}$ and therefore by Proposition~\ref{prop:magnetic-Laplacian-equiv}
             \[ T^{-1} D^2_U T = \tilde{D}^2_U. \]
    \end{proof}

            Finally, it follows straightforwardly from the previous corollary the next result, wich will be the base result for our main purpose to prove controllability at the boundary.

            \begin{corollary}\label{corol:magnetic-standard-equiv}
Let $\Delta$ stand for the Laplacian, i.e. $D^2$ with $A\equiv0$. For every magnetic Laplacian, $D_U^2$, there is an equivalent self-adjoint extension of the Laplacian. Moreover, if $T$ is the multiplication operator defined on Equation \eqref{eq:def-T} with $\chi$ such that $\chi' = A$, then
                \[ T^{-1} D^2_U T = \Delta_{\underline{T}^{-1}U\underline{T}}. \]
            \end{corollary}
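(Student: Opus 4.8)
The plan is to read this off as the special case $\tilde A\equiv 0$ of Proposition~\ref{prop:magnetic-Laplacian-equiv}. First, since $A\in\mathcal{H}^1(L)$ is continuous (Sobolev embedding in one dimension), the Poincaré Lemma provides a differentiable $\chi\colon L\to\R$ with $\chi' = A - \tilde A = A$; this is precisely the function named in the statement, and the associated multiplication operator $T\colon\Phi\mapsto e^{i\chi}\Phi$ of Equation~\eqref{eq:def-T} is an isometry of $\mathcal{L}^2(L)$. Second, one observes that the magnetic Laplacian built from the potential $\tilde A\equiv 0$ is, by the very definition given at the start of this section, the Laplacian $\Delta$; hence the operator denoted $\tilde D^2_V$ in Proposition~\ref{prop:magnetic-Laplacian-equiv} is here $\Delta_V$, a self-adjoint extension of $\Delta$.

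With these identifications, Proposition~\ref{prop:magnetic-Laplacian-equiv} yields directly $T^{-1}D^2_U T = \Delta_V$ with $V = \veval{T}^{-1}U\veval{T}$, where $\veval{T} = \diag(\{e^{i\veval{\chi}(v)}\}_{v\in\partial L})$ is the restriction of $T$ to $\partial L$; this is exactly the claimed identity $T^{-1} D^2_U T = \Delta_{\veval{T}^{-1}U\veval{T}}$, and in particular exhibits $\Delta_{\veval{T}^{-1}U\veval{T}}$ as a self-adjoint extension of the Laplacian equivalent to $D^2_U$. Alternatively one may compose the equivalence of Corollary~\ref{corol:magnetic-Laplacian-constant} with a further gauge transformation removing the remaining constant potential; both routes produce the same boundary unitary up to an irrelevant overall phase, since multiplying $\veval{T}$ by a scalar phase changes neither $\veval{T}^{-1}U\veval{T}$ nor $T^{-1}D^2_U T$.

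There is no genuine obstacle here; the only point deserving care is that, in contrast with Corollary~\ref{corol:magnetic-Laplacian-constant}, the target potential is now fixed to be $0$, so there is no freedom left to adjust $\chi$ (it is determined by $\chi'=A$ up to an additive constant, which merely rescales $T$ by a global phase) and in general one cannot arrange $\veval{T}=\mathbb{I}$. Consequently the boundary parameter really does get conjugated by the nontrivial diagonal unitary $\veval{T}$, and $\dom D^2_U$ and $\dom\Delta_{\veval{T}^{-1}U\veval{T}}$ need not coincide, being merely related through the isometry $T$. This is precisely the form needed in the next section, where it lets one trade a time-dependent family of magnetic boundary conditions for a time-dependent family of boundary conditions for the plain Laplacian.
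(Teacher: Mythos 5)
Your proof is correct and takes essentially the same approach the paper intends: the paper gives no explicit proof of this corollary, asserting only that it ``follows straightforwardly'' from the preceding results, and your direct specialization of Proposition~\ref{prop:magnetic-Laplacian-equiv} to $\tilde{A}\equiv 0$ is precisely that argument. Your side remarks --- that the integration constant in $\chi$ only rescales $T$ by a global phase leaving $\veval{T}^{-1}U\veval{T}$ and $T^{-1}D^2_UT$ unchanged, and that unlike in Corollary~\ref{corol:magnetic-Laplacian-constant} one cannot in general arrange $\veval{T}=\mathbb{I}$ --- are accurate and clarify why the boundary unitary genuinely gets conjugated here.
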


    Among the possible unitary operators $U\in\mathcal{U}(\partial L) \simeq \mathcal{U}(\mathbb{C}^2)$ that one can consider, there are different relevant particular choices. It is important to mention that $U=\mathbb{I}_{2\times 2}$ defines Neumann boundary conditions while $U = -\mathbb{I}_{2\times 2}$ defines Dirichlet boundary conditions. A simple calculation shows that
        $ U = \begin{bmatrix}
            0 & 1 \\ 1 & 0
            \end{bmatrix}
        $
    defines periodic boundary conditions, i.e.\ \hbox{$\Phi(0) = \Phi(l)$}, \hbox{$(D\Phi)(0)= (D\Phi)(l)$}.
    The previous corollaries allow us to define the family of boundary conditions that we will use for the implementation of quantum control at the boundary.

    \begin{definition}
        Let $D_U$ be a magnetic Laplacian on the interval $L$ with periodic boundary conditions and with a constant magnetic potential. Let $T$ be the multiplication operator defined by Equation~\eqref{eq:def-T} with $\chi:L\mapsto\R$ such that $\frac{\d\chi}{\d x}=A$. Then $V= \veval{T}^{-1}U\veval{T}$ defines quasi-periodic boundary conditions.
    \end{definition}

A simple computation shows that the unitary operators appearing in this definition are:
                \[
                    V = \begin{bmatrix}
                        0            &    e^{-iAl}    \\
                        e^{iAl}    &    0
                    \end{bmatrix},
                    \qquad
                    \veval{T} = \begin{bmatrix}
                        1 & 0 \\
                        0 & e^{iA}
                    \end{bmatrix}e^{i b} ,
                \]
            where $b$ is the constant of integration in the definition of the function $\chi(x)$.

The way in which we are going to make use of the result in Corollary~\ref{corol:magnetic-standard-equiv} is as follows. As the quantum control system we will take a free particle moving in the interval $L$. That is, the family of Hamiltonians is taken to be the standard Laplacian or, equivalently, the magnetic Laplacian with $A\equiv 0$. As explained earlier in this section, these operators are not well defined until we fix the corresponding domains. Each operator in this family is going to be characterised by a different quasi-periodic boundary condition. By Corollaries~\ref{corol:magnetic-Laplacian-constant} and \ref{corol:magnetic-standard-equiv} each of these systems is unitarily equivalent to a magnetic Laplacian with constant magnetic potential $A$ and periodic boundary conditions. We stress here that by constant we mean that the potential has the same value, independent of the point of the interval. From now on we will use the same symbol $A$ to denote the constant magnetic potential $A\in\H^1(L)$ and its value $A\in\R$. We are going to consider that the the constant $b=0$. The transformation $T$ of Equation~\eqref{eq:def-T} is defined in this case by the function
$$\chi(x) = A x.$$

We want now to implement the scheme of quantum control at the boundary. This means that we are going to use the parameter $A$ defining the boundary condition as our control and we will suppose that now $A=A(t)$ is a function of time. At every instant of time it will still be a constant magnetic potential along the interval $L$, but its magnitude will depend on time and constitute our control parameter.

Thus, we consider a quantum control system whose Hamiltonians are standard Laplacians with time-dependent quasi-periodic boundary conditions such that
            \begin{equation}\label{eq:quasiperiodic-timedependent}
                \veval{\Psi} (0) = e^{-i\veval{\chi} (l,t)} \veval{\Psi}(l),
            \end{equation}
    where now
        \begin{equation}\label{eq:timepotential}
            \chi(x,t) = A(t) x
        \end{equation}
    forms a family of functions from $L$ to $\R$. One should notice that the time dependence of these Hamiltonians is subtle: usually one faces the problem where $\dom H(t)$ does not depend on time but the explicit, functional form of $H(t)$ does, while here we have $-\Delta$ for every $t$ and $\dom \Delta$ varying with time. That is, we are considering at each time a different self-adjoint extension of the Laplacian on our interval $L$.

    Therefore, as anticipated in the previous section, looking for solutions of the time-dependent Schrödinger equation is harder than in the most common situations. However, based on the equivalence established in this section we will be able to transform these problems into equivalent ones with Hamiltonian $H(t)$ such that $\dom H(t)$ remains independent of $t$ and time dependence appears explicitly in the form of $H(t)$.

In summary, we are interested in the following control problem.

    \begin{definition}\label{def:quase-periodic-BCS}
        Consider the compact interval $L=[0,l]$. The \textbf{boundary control system associated to $L$} is the family of quantum Hamiltonians defined by the Laplace operator and domains given by quasi-periodic boundary conditions $\dom \Delta_{U(t)}$, with
        $$U(t) = \begin{bmatrix}
                           0 & e^{-iA(t)l} \\ e^{iA(t)l} & 0
                       \end{bmatrix}.$$
    \end{definition}

%%%%%%%%%%%%%%%%%%%%%%%%%%%%%%%%%%%%%%%%%%%%%%%%%%%%%%%%%%%%%%%%%%%%%%%%%%%%%%%%%%%%%%%%%%%%%%%%%%%%%%%%%%%%%%%%%%%%%%%%%%%%%%%%%%%%%%%%%%%%%%%%%%%%%%%%%%%%%%%%%%%%%%%%%%%%%%%%%%%%%%%%%%%%%%%%%%%%%%%%%%%%%%%%%%%%%%%%%%%%%%%%%%%%%%%%%%%%%%%%%%%

\section{Existence of Dynamics in Boundary Control Systems}\label{sec:dynamics}

The aim of this section is to study the dynamics of a boundary control system as defined in Definition~\ref{def:quase-periodic-BCS}. It will turn out that the dynamics will be well defined if the control function $A:\R \mapsto \mathcal{H}^1(L)$ varies smoothly with time.
Quantum systems' evolution is given by a Hamiltonian operator $H(t)$, which in the most general setting depends itself on the time $t$, and according to Schrödinger equation
        \begin{equation}\label{eq:schrodinger}
            i \Dt \Psi(t) = H(t) \Psi(t).
        \end{equation}
        In the case we are interested on $H(t)$ is a family of differential operators on $\mathcal{L}^2(L)$ and $\Psi(t)$ is a curve in the state space $\mathcal{P}(\mathcal{H})$.

        Concerning the existence of solutions for the Schrödinger equation with a given Hamiltonian, there are several results establishing conditions for solutions to exist \cite{kisynski_sur_1964, reed_methods_1975}. It is customary to search for solutions using the idea of unitary propagators, which are families of operators which allow us to write the solution of the Schrödinger equation with initial state $\Psi_s$ at $t=s$ as $\Psi(t) = U(t,s)\Psi_s$ for $t > s$. A proper definition of a unitary propagator would be as follows:

        \begin{definition}\label{def:unitary-propagator}
            A two-parameter family of unitary operators $U(s,t)$, with $s,t \in \R$, that satisfies:
            \begin{enumerate}[label=\textit{(\roman*)},nosep]
                \item $U(r, s)U(s, t) = U(r, t)$
                \item $U(t,t) = I$
                \item $U(s, t)$ is jointly strongly continuous in $s$ and $t$
            \end{enumerate}
            is called a \textbf{unitary propagator}.
        \end{definition}

        After unitary propagators are introduced, the existence of the solutions for the associated Cauchy problems is equivalent to the existence of a unitary propagator for the Eq.~\eqref{eq:schrodinger}. For the most general setting, in which $\dom H(t)$ varies with $t$, J. Kisyński gave conditions that $H(t)$ must satisfy for the unitary propagator to exist \cite{kisynski_sur_1964}. However, we will be interested in the less general case in which $\DD = \dom H(t)$ is the same for every $t$ and thus it is enough to consider a less general result by M. Reed and B. Simon \cite[\S X.12]{reed_methods_1975}. Instead of treating the case of families of self-adjoint operators, they study the more general case of families of generators of contraction semigroups, which can be directly applied to the case of families of self-adjoint operators since for $H$ self-adjoint, $\pm iH$ is the generator of a contraction semigroup (see Theorem X.47a and Example 1 on \S{X}.8 of \cite{reed_methods_1975}).

    Let $S(t)$ denote a family of generators of a contraction semigroup. For such a case, Reed and Simon define an approximation for the propagator $U(t,s)$ solving the equation
        \[
            \Dt \varphi(t) = -S(t) \varphi(t), \qquad \varphi(s) = \varphi_s
        \]
    in the following way. First there is considered a partition of the time interval, taking a generator which is constant on each element of the partition and providing conditions ensuring that it converges to the solution. If, for example, the time interval we are interested in is $I = [0, 1]$, they take the partition made of $k$ elements $I_j = [\frac{j-1}{k}, \frac{j}{k}]$, $1 \leq j \leq k$, and define the approximate propagator
        \begin{equation}\label{eq:RSpropagator}
            U_k(t, s) = \begin{cases}
                \exp\left(-i(t-s) S\left( \frac{j - 1}{k} \right)\right)  & \text{if}\quad \frac{j - 1}{k} \leq s \leq t \leq \frac{j}{k} \\
                U_k(t, \frac{j-1}{k})
                                  U_k(\frac{j-1}{k}, \frac{j-2}{k})
                                  \cdots
                                  U_k(\frac{j-l}{k}, s) & \text{if}\quad \frac{j-(l+1)}{k} \leq s \leq \frac{j-l}{k} \leq \frac{j-1}{k} \leq  t \leq \frac{j}{k}.
                              \end{cases}
         \end{equation}

    That is, if $s,t$ lie in the same interval $I_j$ they consider the evolution operator given by the action of the contraction semigroup generated by $S(\frac{j-1}{k})$ and if $t,s$ lie in different intervals, they use the product property of the unitary propagator to define it.

Before stating the Theorem by M. Reed and B. Simon let us prove the following result that allows to treat the boundary control problem as a time dependent problem with fixed domain. Following the ideas exposed in the previous section, we can find a natural equivalence between a boundary control system and a magnetic controlled one:

\begin{proposition}\label{prop:equivmagneticLaplacian}
    Every boundary control system is (unitarily) equivalent to a magnetic control system, that is, a system whose evolution is given by the Hamiltonian
    \[ H(t) = -\left[\left(\Dx - i A(t)\right)^2 +  A'(t) x \right] \]
    with periodic boundary conditions and controls $A : I \subset \R \mapsto \mathcal{H}^1(L)$, where $I$ is some compact interval and $\mathcal{H}^1(L)$ is the Sobolev space of order 1 on the interval $L$.
\end{proposition}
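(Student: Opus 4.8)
The plan is to upgrade the static unitary equivalence of Corollary~\ref{corol:magnetic-standard-equiv} to a time-dependent one and to pay for the $t$-dependence of the intertwining operator with an extra potential term in the Hamiltonian. For each $t$ let $T(t)$ be the multiplication operator of Equation~\eqref{eq:def-T} associated with $\chi(x,t)=A(t)x$, that is, multiplication by $e^{iA(t)x}$. For every fixed $t$ this is a unitary on $\mathcal{L}^2(L)$ and, by Corollary~\ref{corol:magnetic-standard-equiv}, it maps $\dom\Delta_{U(t)}$ onto the domain of the magnetic Laplacian $D^2$ with $D=\Dx-iA(t)$ and \emph{periodic} boundary conditions, and satisfies $\Delta_{U(t)}=T(t)^{-1}D^2T(t)$. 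The crucial observation is that this target domain does not depend on $t$: periodic boundary conditions for the magnetic Laplacian are the same regardless of the value of the constant $A(t)$, so all of the time dependence has been moved into the conjugating family $T(t)$.

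Let $\Psi(t)$ be a solution of $i\partial_t\Psi(t)=-\Delta_{U(t)}\Psi(t)$ with $\Psi(t)\in\dom\Delta_{U(t)}$, and set $\Phi(t)\coloneqq T(t)\Psi(t)$, which then lies in the fixed domain above for every $t$. Substituting $\Psi(t)=T(t)^{-1}\Phi(t)$ and differentiating the product, the spatial part is handled by the intertwining identity, $-\Delta_{U(t)}\Psi(t)=-T(t)^{-1}D^2\Phi(t)$, while the time derivative of $T(t)^{-1}$, i.e.\ of multiplication by $e^{-iA(t)x}$, produces the extra contribution: since $\partial_t T(t)^{-1}=-iA'(t)x\,T(t)^{-1}$ as commuting multiplication operators, one has $i\partial_t\left(T(t)^{-1}\Phi(t)\right)=A'(t)x\,T(t)^{-1}\Phi(t)+iT(t)^{-1}\partial_t\Phi(t)$. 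Cancelling the common factor $T(t)^{-1}$ and rearranging gives
\[ i\partial_t\Phi(t)=-\left[\left(\Dx-iA(t)\right)^2+A'(t)x\right]\Phi(t)=H(t)\Phi(t), \]
which is the asserted equation, posed on the $t$-independent domain $\dom D^2$ with periodic boundary conditions.

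Since the computation is reversible step by step, $\Psi(t)\leftrightarrow\Phi(t)$ is a bijection between the solution curves of the two Cauchy problems; and since each $T(t)$ is a genuine unitary operator on $\mathcal{L}^2(L)$, it preserves the Hilbert space structure at every instant and descends to $\mathcal{P}(\mathcal{H})$. Thus the two systems have the same dynamics up to this time-dependent unitary, which is the sense in which we call them unitarily equivalent. Finally $H(t)$ has the advertised form: $A(t)$ being constant in $x$, the term $A'(t)x$ is the bounded real-valued function $x\mapsto A'(t)x$ on $L$, hence a bounded self-adjoint multiplication operator, so $H(t)$ is self-adjoint on $\dom D^2$ with periodic boundary conditions; and the controls are precisely the data $A\colon I\to\mathcal{H}^1(L)$ of the original boundary control system (Definition~\ref{def:quase-periodic-BCS}).

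The only points requiring care are the two ingredients used above: that $T(t)$ genuinely maps the moving domain $\dom\Delta_{U(t)}$ onto one and the same domain for every $t$ — this is Corollary~\ref{corol:magnetic-standard-equiv} read pointwise in $t$, together with the remark that the periodic boundary conditions do not move with $A(t)$ — and that $t\mapsto T(t)^{-1}\Phi(t)$ is differentiable so that the product rule applies, which is exactly where the smoothness assumption on the control $A(\cdot)$ enters. Beyond that the argument is just the product rule and a careful bookkeeping of signs.
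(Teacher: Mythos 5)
Your proposal is correct and follows essentially the same route as the paper: conjugate by the time-dependent multiplication operator $T(t)=e^{iA(t)x}$ from Corollary~\ref{corol:magnetic-standard-equiv}, apply the product rule, and let the $t$-derivative of the phase generate the extra term $A'(t)x$. The only (cosmetic) difference is that you differentiate $\Psi=T(t)^{-1}\Phi$ rather than $\Phi=T(t)\Psi$, and you make explicit a few points the paper leaves implicit (the $t$-independence of the periodic domain and the differentiability of $t\mapsto T(t)$), which is fine.
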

\begin{proof}
    Take the family of unitary transformations $T(t)$ as in \eqref{eq:def-T} with $\chi = \chi(t)$:
    \[
        T(t): \Psi \in \mathcal{L}^2(L) \mapsto e^{i \chi (t)} \Psi \in \mathcal{L}^2(L).
    \]
    Define $\Phi(t) = T(t)\Psi(t)$. The chain rule implies
    \[ \Dt \Phi(t) = \D{T}{t}(t) \Psi(t) + T(t) \Dt \Psi(t), \]
    where the derivatives of the operators have to be understood in the strong operator topology sense. Using the Schrödinger equation for $\Psi$, cf. Eq.~\eqref{eq:schrodinger}, and the definition of $T(t)$, we have
    \[ i \Dt \Phi(t) = \left[- \Dt \chi(t) - T(t) \Delta T(t)^{-1} \right] \Phi(t). \]
    Take $\chi(t)=A(t)x$ as in Equation~\eqref{eq:timepotential} and remember that we are assuming that the integration constant is $b=0$.
%    , we find that the time dependence on the boundary conditions appears after the transformationas a time dependence of the functions $A$, $b$ defined there.
    Thus, we have $\chi(t) = A(t) x$ and $T(t)\Delta T(t)^{-1} = -\left(\Dx - iA(t)\right)^2 \eqqcolon -D^2$. Thus we have finally that
    \[
        i \Dt \Phi(t) = -\left[\left(\Dx - i A(t)\right)^2 + A'(t) x \right] \Phi(t),
    \]
    with periodic boundary conditions for every $t\in I$:
    \[ \left\{ \begin{alignedat}{2}
        &\Phi(0) = \Phi(l)\\
        &(D\Phi)(0) = (D\Phi)(l) \Leftrightarrow \frac{\d\Phi}{\d x}\biggr|_{x=0} = \frac{\d\Phi}{\d x}\biggr|_{x=l}.
    \end{alignedat} \right. \]
\end{proof}

            Notice that the equivalence in the last condition follows because we are considering magnetic potentials that are constant on the interval $L$. This proposition shows how to treat the boundary control system applying a unitary transformation which leads to a magnetic controlled system, where the time-dependence of the Hamiltonian's domain has been removed. \\

        For each $t\in \R$ let  $S(t): \DD \subset \H \mapsto \H$ be the generator of a contraction semigroup, densely defined on $\DD$. Notice that we are assuming that the domain $\DD$ remains fixed for every $t$. Let $\rho(S(t))$ denote the resolvent set of the operator $S(t)$ and assume that $0 \in \rho(S(t))$ for all $t\in\R$. For convenience of the notation it is defined a two-parameter family of operators
        \[ C(t, s) = S(t)S(s)^{-1} - I. \]
        Note that $0 \in \rho(S(t))$ for all $t$ implies that $S(t)$ is a bijection of $\DD$ onto $\H$, and therefore $C(t, s)$ it is a bounded operator by the Closed Graph Theorem.  Moreover, for every $\Phi \in \H$ and every $s\in\R$ there exists $\Psi \in \DD$ such that $\Phi = S(s) \Psi$. Thus, for that $\Phi$,
        $$C(t,s) \Phi = S(t) \Psi - S(s)\Psi.$$
        That is, studying the behaviour of $C(t,s)\Phi$ for any $\Phi \in \H$ can be understood as studying that of $S(t) \Psi - S(s) \Psi$ for any $\Psi \in \DD$.
        In order to prove existence of dynamics of the boundary control system we are going to use the next result by M. Reed and B. Simon in what follows.

        \begin{theorem}[M. Reed and B. Simon, {\cite[Thm. X.70]{reed_methods_1975}}]\label{thm:reed-simon}
            Let $\H$ be a Hilbert space and let $I$ be an open interval in $\R$. For each $t \in I$, let $S(t)$ be the generator of a contraction semigroup on $\H$ so that $0 \in \rho(S(t))$ and
            \begin{enumerate}[label={\it (\alph*)},nosep]
                \item\label{enum:reedSimon-i} The $S(t)$ have common domain $\DD$.
                \item\label{enum:reedSimon-ii} For each $\Phi \in \H, (t - s)^{-1} C(t, s)\Phi$ is uniformly strongly continuous and uniformly bounded in $s$ and $t$ for $t \neq s$ lying in any fixed compact subinterval of $I$.
                \item\label{enum:reedSimon-iii} For each $\Phi \in \H$, $C(t)\Phi = \lim_{s \nearrow t} (t - s)^{-1}C(t, s)\Phi$ exists uniformly for $t$ in each compact subinterval of $I$ and $C(t)$ is bounded and strongly continuous in $t$.
            \end{enumerate}
        Then for all $s \leq t$ in any compact subinterval of $I$ and any $\Phi \in \H$,
        \[ U(t, s) \Phi = \lim_{k \to \infty} U_k(t, s)\Phi \]
        exists uniformly in $s$ and $t$, where $U_k(t, s)$ is given by Equation~\eqref{eq:RSpropagator}. Further, if $\Phi_s \in \DD$, then $\Phi(t) = U(t, s)\Phi_s$ is in $\DD$ for all $t$ and satisfies
        \[ \Dt \Phi(t) = -S(t) \Phi(t), \qquad \Phi(s) = \Phi_s \]
        and $\norm{\Phi(t)} \leq \norm{\Phi_s}$ for all $t \geq s$.\\
        \end{theorem}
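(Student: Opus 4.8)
The plan is to prove the statement along the lines of the classical parabolic/hyperbolic propagator constructions: show that the approximants $U_k(t,s)$ of Equation~\eqref{eq:RSpropagator} form a strong-operator Cauchy sequence, uniformly on compact subintervals; identify the limit $U(t,s)$; and then verify that it is a contraction propagator solving the Cauchy problem. To set up, note first that for $s\le t$ each $U_k(t,s)$ is, by its definition in \eqref{eq:RSpropagator}, a finite product of operators coming from the contraction semigroups $e^{-uS(j/k)}$, $u\ge0$, hence itself a contraction, so $\norm{U_k(t,s)}\le1$ for all $k$; one also reads off $U_k(t,t)=I$, the cocycle identity $U_k(t,r)U_k(r,s)=U_k(t,s)$ for $s\le r\le t$, and joint strong continuity of $U_k$. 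Writing $\tau_k(r)$ for the left endpoint of the partition cell containing $r$ and $S_k(r):=S(\tau_k(r))$ for the piecewise-constant generator used in \eqref{eq:RSpropagator}, hypothesis \ref{enum:reedSimon-ii} together with the uniform boundedness principle gives, on each fixed compact subinterval, a constant $M$ with $\norm{C(t,s)}\le M\abs{t-s}$, so that $B_k(r):=S_k(r)S(r)^{-1}-I=C(r,\tau_k(r))$ satisfies $\varepsilon_k:=\sup_r\norm{B_k(r)}\le M/k\to0$.

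For the convergence, fix $\Phi\in\DD$ and $s\le t$ in a compact subinterval $[a,b]\subset I$. Differentiating $r\mapsto U_k(t,r)U_m(r,s)\Phi$ cell by cell on the common refinement of the two partitions (the grid points contribute nothing, both factors being norm-continuous) yields the telescoping identity
\[
 U_m(t,s)\Phi-U_k(t,s)\Phi=\int_s^t U_k(t,r)\bigl(S_k(r)-S_m(r)\bigr)U_m(r,s)\Phi\,\d r .
\]
Since $\norm{U_k(t,r)}\le1$ and $S_k(r)-S_m(r)=\bigl(B_k(r)-B_m(r)\bigr)S(r)$ on $\DD$, the integrand is dominated by $(\varepsilon_k+\varepsilon_m)\norm{S(r)U_m(r,s)\Phi}$, so everything reduces to the a priori bound
\[
 \sup_m\ \sup_{a\le s\le r\le b}\ \norm{S(r)U_m(r,s)\Phi}\ \le\ K\,\norm{S(s)\Phi}
\]
for some $K=K(a,b)$. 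Granting it, $\norm{U_m(t,s)\Phi-U_k(t,s)\Phi}\le(b-a)K(\varepsilon_k+\varepsilon_m)\norm{S(s)\Phi}\to0$ uniformly for $s,t\in[a,b]$; by density of $\DD$ and $\norm{U_k}\le1$, the limit $U(t,s)\Phi:=\lim_k U_k(t,s)\Phi$ then exists for all $\Phi\in\H$, uniformly on compacts, and $\norm{U(t,s)}\le1$.

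The a priori bound uses crucially that the $S(t)$ share the domain $\DD$ (hypothesis \ref{enum:reedSimon-i}): starting from $\Phi\in\DD$ one checks inductively that $U_m(\tau,s)\Phi\in\DD$ at every grid point $\tau$, so $h_m(r):=S_m(r)U_m(r,s)\Phi$ is well defined, equals $e^{-(r-\tau)S(\tau)}h_m(\tau)$ on each cell interior — hence $\norm{h_m}$ does not increase inside a cell — while across a grid point $\tau'$ one has $h_m(\tau'^{+})=(I+C(\tau',\tau))h_m(\tau'^{-})$, so $\norm{h_m(\tau'^{+})}\le(1+M/m)\norm{h_m(\tau'^{-})}$; iterating over the at most $(b-a)m+1$ cells and using $(1+M/m)^{(b-a)m+1}\le e^{M(b-a)+M}$ (plus $\norm{S(r)\,\cdot\,}\le 2\norm{S_m(r)\,\cdot\,}$ for $m$ large) gives $K$. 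Passing to the limit in the structural identities of the first paragraph shows $U(t,t)=I$, $U(t,r)U(r,s)=U(t,s)$, $\norm{U(t,s)}\le1$, and joint strong continuity, so $U(t,s)$ is a contraction propagator. Finally, for $\Phi_s\in\DD$ one integrates the cell-wise relation $\Dt U_m(t,s)\Phi_s=-h_m(t)$ to get $U_m(t,s)\Phi_s-\Phi_s=-\int_s^t h_m(r)\,\d r$ with $\norm{h_m}\le K\norm{S(s)\Phi_s}$ uniformly; closedness of $S(r)$ together with $B_m(r)\to0$ identifies the limit of $h_m(r)$ as $S(r)U(r,s)\Phi_s$ (so $U(r,s)\Phi_s\in\DD$) and gives $U(t,s)\Phi_s-\Phi_s=-\int_s^t S(r)U(r,s)\Phi_s\,\d r$; hypothesis \ref{enum:reedSimon-iii} then supplies the differentiated comparison of the $S(t)$ at nearby times needed for strong continuity of $r\mapsto S(r)U(r,s)\Phi_s$, whence the integral may be differentiated to yield $\Dt\Phi(t)=-S(t)\Phi(t)$, $\Phi(s)=\Phi_s$, and $\norm{\Phi(t)}\le\norm{\Phi_s}$ follows from $\norm{U(t,s)}\le1$.

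The main obstacle is twofold. The first is bookkeeping at the partition grid points in the a priori bound — controlling the jumps of the piecewise-constant generator uniformly in $m$ without losing the contraction estimate. The more serious one is the endgame: upgrading the weak, almost-everywhere information on $h_m$ to genuine strong continuity of $t\mapsto S(t)U(t,s)\Phi_s$, which is exactly the point where hypothesis \ref{enum:reedSimon-iii} is indispensable; everything else is routine $C_0$-contraction-semigroup estimation.
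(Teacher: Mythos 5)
The paper does not prove this statement at all: it is quoted verbatim as Theorem X.70 of Reed and Simon and used as a black box, so there is no internal proof to compare against. Your sketch is, in essence, a reconstruction of Reed and Simon's own argument --- the telescoping integral identity for $U_m(t,s)\Phi-U_k(t,s)\Phi$, the factorisation $S_k(r)-S_m(r)=(B_k(r)-B_m(r))S(r)$ with $\norm{B_k(r)}\le M/k$ coming from hypothesis \textit{(b)} via uniform boundedness, and the a priori bound on $h_m(r)=S_m(r)U_m(r,s)\Phi$ obtained by commuting $S(\tau)$ through its own semigroup inside each cell and paying a factor $(1+M/m)$ at each grid point --- and I find it sound as a proof outline. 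The one place where you are thinnest is the endgame: passing from boundedness of $h_m(r)$ to the identification $\lim_m h_m(r)=S(r)U(r,s)\Phi_s$ requires an explicit weak-compactness-plus-weak-closedness-of-the-graph argument (or, as in the original source, a separate Cauchy estimate for $S(t)U_k(t,s)S(s)^{-1}$), and the role of hypothesis \textit{(c)} is specifically to make $t\mapsto S(t)U(t,s)\Phi_s$ strongly continuous so that the integral equation can be differentiated from both sides; you name both issues correctly but do not carry them out. Also note the harmless index transposition $B_k(r)=C(\tau_k(r),r)$ rather than $C(r,\tau_k(r))$.
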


        The rest of this section is devoted to prove that the family of magnetic Laplacians of Proposition~\ref{prop:equivmagneticLaplacian} meets the conditions of the Theorem~\ref{thm:reed-simon}.

        We are going to work with Hamiltonians whose time-dependent structure can be written as
        \[ H(t) = \sum_{i = 1}^n f_i(t) H_i, \]
        with $f_i: \R \to \R$ containing all the time dependence and $H_i$ being constant symmetric operators. Applying Theorem~\ref{thm:reed-simon} to this type of Hamiltonians is the purpose of Theorem~\ref{thm:timedependent-linearCombination-Hamiltonian}, which establishes sufficient conditions to be fulfilled so that the existence of a unitary propagator is guaranteed.

        \begin{theorem}\label{thm:timedependent-linearCombination-Hamiltonian}
            Let $\{H_i\}_{i=1}^n$ be a family of symmetric operators densely defined on $\mathcal{D}\subset{\H}$ and let $f_i: I \subset \R \to \R$ be real valued functions for $1 \leq i \leq n$. Define the time-dependent operator
            \[
                H(t) = \sum_{i = 1}^n f_i(t) H_i, \qquad
                \dom H(t) = \DD.
            \]
            If it holds
            \begin{enumerate}[label=\textit{(\roman*)},nosep]
                \item\label{enum:H1-thm-linCombHamiltonian} $H(t)$ is self-adjoint for all $t\in I$,
                \item\label{enum:H2-thm-linCombHamiltonian} $f_i \in C^1(I)$ for every $i$, and
                \item\label{enum:H3-thm-linCombHamiltonian} for every $i$ there exists a $K > 0$ (not depending on $t$) such that for every $\Psi \in \mathcal{D}$, {$$\norm{H_i \Psi} \leq K (\norm{H(t)\Psi} + \norm{\Psi})$$} for every $t \in I$.
            \end{enumerate}
            Then, there exists a strongly differentiable unitary propagator $U(t,s)$ with $s,t \in I$ such that, for any $\Psi_s \in \DD$, $\Psi(t) = U(t,s)\Psi_s$ satisfies
            \[ \Dt \Psi(t) = -i H(t) \Psi(t), \qquad \Psi(s) = \Psi_s. \]
        \end{theorem}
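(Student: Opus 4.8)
The plan is to reduce the statement to Theorem~\ref{thm:reed-simon}. Since $H(t)$ is self-adjoint by~\ref{enum:H1-thm-linCombHamiltonian}, Stone's theorem shows that $-iH(t)$ generates a unitary group, hence a contraction semigroup; but $0$ need not belong to $\rho(iH(t))$, so I would instead work with the shifted generator $S(t) := iH(t) + I$ on the common domain $\DD$. Then $-S(t)$ generates the contraction semigroup $e^{-\tau}e^{-i\tau H(t)}$, the spectrum of $S(t)$ lies on the line $\operatorname{Re} z = 1$, so $0 \in \rho(S(t))$, $S(t)$ is a bijection of $\DD$ onto $\H$, and by functional calculus $\norm{S(t)^{-1}} \leq 1$ and $\norm{H(t)S(t)^{-1}} \leq 1$. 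Condition~\ref{enum:reedSimon-i} of Theorem~\ref{thm:reed-simon} holds trivially because the $H_i$, hence the $H(t)$ and the $S(t)$, all have domain $\DD$.

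The key elementary computation is that, for $\Phi \in \H$, writing $\Psi_s := S(s)^{-1}\Phi \in \DD$ one has
\[
    C(t,s)\Phi = S(t)\Psi_s - S(s)\Psi_s = i\bigl(H(t) - H(s)\bigr)\Psi_s = i\sum_{i=1}^n \bigl(f_i(t) - f_i(s)\bigr) H_i \Psi_s ,
\]
so $(t-s)^{-1}C(t,s)\Phi = i\sum_i \tfrac{f_i(t)-f_i(s)}{t-s}\,H_i\Psi_s$. To control this I would establish two estimates. First, self-adjointness of $H(s)$ gives the orthogonality identity $\norm{\Phi}^2 = \norm{S(s)\Psi_s}^2 = \norm{H(s)\Psi_s}^2 + \norm{\Psi_s}^2$, so $\norm{H(s)\Psi_s} \leq \norm{\Phi}$ and $\norm{\Psi_s} \leq \norm{\Phi}$; combined with hypothesis~\ref{enum:H3-thm-linCombHamiltonian} this yields the bound $\norm{H_i\Psi_s} \leq 2K\norm{\Phi}$, uniform in $s\in I$. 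Second, $s \mapsto \Psi_s$ is continuous in the graph norm: from $S(s)(\Psi_s - \Psi_{s'}) = i(H(s)-H(s'))\Psi_{s'}$ together with $\norm{S(s)^{-1}} \leq 1$ and $\norm{H(s)S(s)^{-1}} \leq 1$, both $\norm{\Psi_s - \Psi_{s'}}$ and $\norm{H(s)(\Psi_s-\Psi_{s'})}$ are bounded by $\sum_i|f_i(s)-f_i(s')|\,\norm{H_i\Psi_{s'}} \leq 2K\norm{\Phi}\sum_i|f_i(s)-f_i(s')|$, and a further application of~\ref{enum:H3-thm-linCombHamiltonian} upgrades this to $\norm{H_i(\Psi_s-\Psi_{s'})} \to 0$; continuity of the $f_i$ (hypothesis~\ref{enum:H2-thm-linCombHamiltonian}) then makes $s\mapsto H_i\Psi_s$ norm-continuous.

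With these estimates conditions~\ref{enum:reedSimon-ii} and~\ref{enum:reedSimon-iii} follow on any compact $J\subset I$: uniform boundedness of $(t-s)^{-1}C(t,s)\Phi$ comes from $\bigl|\tfrac{f_i(t)-f_i(s)}{t-s}\bigr| \leq \sup_J|f_i'|$ (mean value theorem, using $f_i\in C^1$) and $\norm{H_i\Psi_s}\leq 2K\norm{\Phi}$; joint strong continuity off the diagonal comes from continuity of $(t,s)\mapsto\tfrac{f_i(t)-f_i(s)}{t-s}$ and of $s\mapsto H_i\Psi_s$; letting $s\nearrow t$ gives $C(t)\Phi = i\sum_i f_i'(t)\,H_iS(t)^{-1}\Phi$, bounded by $2K\sum_i|f_i'(t)|$ and strongly continuous in $t$, and the limit is uniform on $J$ because $\bigl|\tfrac{f_i(t)-f_i(s)}{t-s}-f_i'(t)\bigr|\to0$ uniformly (uniform continuity of $f_i'$ on $J$ plus the mean value theorem) while $\norm{H_i(\Psi_s-\Psi_t)}$ is controlled uniformly in $t$ by a modulus of continuity of the $f_i$. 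Theorem~\ref{thm:reed-simon} then produces a contraction propagator $\tilde U(t,s)$ solving $\Dt \tilde\Phi(t) = -S(t)\tilde\Phi(t) = -(iH(t)+I)\tilde\Phi(t)$ with $\tilde\Phi(t)\in\DD$ whenever $\tilde\Phi(s)\in\DD$. Finally I would undo the shift by setting $U(t,s) := e^{\,t-s}\,\tilde U(t,s)$: a direct computation gives $\Dt U(t,s)\Phi_s = -iH(t)U(t,s)\Phi_s$ and $U(s,s)=I$, the composition law and joint strong continuity of Definition~\ref{def:unitary-propagator} are inherited from $\tilde U$, and since $\Dt \norm{U(t,s)\Phi_s}^2 = 2\operatorname{Re}\scalar{U(t,s)\Phi_s}{-iH(t)U(t,s)\Phi_s} = 0$ for $\Phi_s\in\DD$ the operators $U(t,s)$ are isometric, hence (invertible with inverse $U(s,t)$) unitary, and strongly differentiable on $\DD$.

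I expect the main obstacle to be precisely the verification of~\ref{enum:reedSimon-ii}--\ref{enum:reedSimon-iii}: one must convert the relative-boundedness hypothesis~\ref{enum:H3-thm-linCombHamiltonian}, stated with respect to $H(t)$, into bounds that are uniform in $t$ and that persist after inverting $S(t)$, the self-adjointness identity $\norm{S(s)\Psi}^2 = \norm{H(s)\Psi}^2 + \norm{\Psi}^2$ being the device that makes this possible; the graph-norm continuity of $s\mapsto S(s)^{-1}\Phi$ is the least routine ingredient. The shift $S(t)=iH(t)+I$, needed only to secure $0\in\rho(S(t))$, is harmless but must be carried through the bookkeeping.
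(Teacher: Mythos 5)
Your proposal is correct and takes essentially the same route as the paper: reduction to Theorem~\ref{thm:reed-simon} via the shifted generator $\tilde S(t)=iH(t)+I$, verification of hypotheses \ref{enum:reedSimon-i}--\ref{enum:reedSimon-iii} through a $t$-uniform bound on $H_i\tilde S(s)^{-1}$ derived from hypothesis \ref{enum:H3-thm-linCombHamiltonian} together with a continuity estimate for $s\mapsto\tilde S(s)^{-1}\Phi$, and removal of the shift at the end. Your orthogonality identity $\norm{(iH(s)+I)\Psi}^2=\norm{H(s)\Psi}^2+\norm{\Psi}^2$ merely replaces the paper's triangle-inequality estimate in its Lemma~\ref{lemma:1}, and your compensating factor $e^{t-s}$ is the correct form of the factor the paper writes as $e^{-i(s-t)}$ (evidently a typo there), so the differences are cosmetic.
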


        Before we introduce the proof it is useful to introduce the following lemmas.

        \begin{lemma}\label{lemma:1}
            Let $H(t)$ be as in Theorem~\ref{thm:timedependent-linearCombination-Hamiltonian} and define $S_i = iH_i$ and $\tilde{S}(t) = iH(t) + I$. Then, for every $\Phi \in \mathcal{H}$, there exists $K>0$, independent of $\Phi$ and $t$, such that
            \[
                \norm{S_i \tilde{S}(t)^{-1} \Phi} \leq K \norm{\Phi}.
            \]
        \end{lemma}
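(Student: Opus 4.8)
The plan is to show that the operator $S_i\tilde S(t)^{-1}$ is bounded with a norm bound uniform in $t$, using hypothesis \ref{enum:H3-thm-linCombHamiltonian} together with self-adjointness of $H(t)$. First I would observe that $\tilde S(t) = iH(t) + I$ is a bijection from $\DD$ onto $\H$: since $H(t)$ is self-adjoint, $-i\notin\sigma(H(t))$, equivalently $0\notin\sigma(iH(t)+I)$, so $\tilde S(t)^{-1}\colon\H\to\DD$ is everywhere defined and bounded. Given $\Phi\in\H$, set $\Psi = \tilde S(t)^{-1}\Phi\in\DD$, so that $\Phi = iH(t)\Psi + \Psi$. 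I would then estimate $\norm{S_i\tilde S(t)^{-1}\Phi} = \norm{iH_i\Psi} = \norm{H_i\Psi}$ and apply \ref{enum:H3-thm-linCombHamiltonian}:
\[
    \norm{H_i\Psi} \leq K\bigl(\norm{H(t)\Psi} + \norm{\Psi}\bigr).
\]

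Next I would control $\norm{H(t)\Psi}$ and $\norm{\Psi}$ by $\norm{\Phi}$. Here self-adjointness is the key: for $H(t)$ self-adjoint the vectors $H(t)\Psi$ and $\Psi$ are ``orthogonal up to a real pairing'', so that $\norm{\Phi}^2 = \norm{iH(t)\Psi + \Psi}^2 = \norm{H(t)\Psi}^2 + \norm{\Psi}^2 + 2\operatorname{Re}\scalar{iH(t)\Psi}{\Psi}$, and the cross term $2\operatorname{Re}\bigl(i\scalar{H(t)\Psi}{\Psi}\bigr) = -2\operatorname{Im}\scalar{H(t)\Psi}{\Psi}$ vanishes because $\scalar{H(t)\Psi}{\Psi}$ is real. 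Hence $\norm{H(t)\Psi}^2 + \norm{\Psi}^2 = \norm{\Phi}^2$, which gives both $\norm{H(t)\Psi}\leq\norm{\Phi}$ and $\norm{\Psi}\leq\norm{\Phi}$. Combining with the previous display yields $\norm{S_i\tilde S(t)^{-1}\Phi}\leq 2K\norm{\Phi}$, with $2K$ independent of $\Phi$ and of $t$ (the constant $K$ from \ref{enum:H3-thm-linCombHamiltonian} is already $t$-independent by assumption). Relabelling $2K$ as $K$ finishes the argument.

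There is no serious obstacle here; the only point requiring a little care is the cancellation of the cross term, i.e. making explicit that self-adjointness of $H(t)$ (not merely symmetry) is what guarantees $\scalar{H(t)\Psi}{\Psi}\in\R$ for $\Psi\in\DD$ and that $\tilde S(t)^{-1}$ is globally defined on $\H$. One should also note that the lemma as stated asserts the bound for every $\Phi\in\H$, so it is important that $\tilde S(t)^{-1}\Phi\in\DD$ for all such $\Phi$, which is exactly the surjectivity of $\tilde S(t)$ onto $\H$ coming from $0\in\rho(\tilde S(t))$. With these observations the estimate is a direct computation.
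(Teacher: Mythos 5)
Your proposal is correct and follows essentially the same route as the paper: both apply hypothesis \ref{enum:H3-thm-linCombHamiltonian} to $\Psi=\tilde S(t)^{-1}\Phi$ and then control $\norm{H(t)\Psi}$ and $\norm{\Psi}$ by $\norm{\Phi}$; the paper does this via the triangle inequality together with the spectral estimate $\norm{\tilde S(t)^{-1}}\leq 1$, whereas you use the exact Pythagorean identity $\norm{H(t)\Psi}^2+\norm{\Psi}^2=\norm{\Phi}^2$, which is an equivalent (and slightly sharper) way of exploiting self-adjointness. One tiny correction: symmetry of $H(t)$ already makes $\scalar{H(t)\Psi}{\Psi}$ real; what self-adjointness buys is the surjectivity of $\tilde S(t)$ onto $\H$, which you do identify correctly.
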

        \begin{proof}
            Since $H(t)$ is self-adjoint, the spectrum of $\tilde{S}(t)$ is a subset of $i \mathbb{R} + 1 = \{i \alpha + 1: \alpha \in \mathbb{R}\} \subset \mathbb{C}$. Thus $\tilde{S}^{-1}$ is bounded and maps $\mathcal{H}$ onto $\mathcal{D}$.

            That said, this lemma is a direct consequence of hypothesis \ref{enum:H3-thm-linCombHamiltonian} of Theorem~\ref{thm:timedependent-linearCombination-Hamiltonian}. For every $\Psi \in \mathcal{D}$,
            \[
                \norm{S_i \Psi} = \norm{H_i \Psi} \leq K (\norm{H(t) \Psi} + \norm{\Psi})
                = K\left(\norm{\tilde{S}(t)\Psi - \Psi} + \norm{\Psi} \right) \leq K \left(\norm{\tilde{S}(t)\Psi} + 2\norm{\Psi} \right)
            \]
            for every $t$. Since $\tilde{S}(t)^{-1}\Phi \in \mathcal{D}$, for every $\Phi \in \mathcal{H}$ it holds:
            \[
                \norm{S_i \tilde{S}(t)^{-1}\Phi}
                \leq K \left(\norm{\Phi} + 2\norm{\tilde{S}(t)^{-1}\Phi} \right).
            \]
            The distance from 0 to $\sigma(\tilde{S}(t))$ is at least 1, and thus $\norm{\tilde{S}(t)^{-1}} \leq 1$. Hence, renaming the constant we get
            \[
                \norm{S_i \tilde{S}(t)^{-1}\Phi} \leq K \norm{\Phi}.
            \]
        \end{proof}
        \begin{lemma}\label{lemma:2}
            Let $H(t)$ be as in Theorem~\ref{thm:timedependent-linearCombination-Hamiltonian} and define $S_i = iH_i$ and $\tilde{S}(t) = iH(t) + I$. Then, for $s,t$ lying in a compact subset of $I$ and for every $\Phi \in \mathcal{H}$, $\lim_{t \to s} \tilde{S}(t)\tilde{S}(s)^{-1} \Phi = \Phi$ uniformly on $s$.
        \end{lemma}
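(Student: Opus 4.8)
The plan is to reduce the statement to the uniform resolvent estimate already established in Lemma~\ref{lemma:1} together with the regularity hypothesis \ref{enum:H2-thm-linCombHamiltonian}. First I would write, for $s,t$ in a compact subinterval $J\subset I$,
\[
    \tilde{S}(t)\tilde{S}(s)^{-1} - I = \bigl(\tilde{S}(t) - \tilde{S}(s)\bigr)\tilde{S}(s)^{-1}.
\]
This factorisation is legitimate because, exactly as observed in the proof of Lemma~\ref{lemma:1}, self-adjointness of $H(s)$ forces $\sigma(\tilde{S}(s)) \subset i\R + 1$, so $0 \in \rho(\tilde{S}(s))$, the operator $\tilde{S}(s)^{-1}$ is bounded with $\norm{\tilde{S}(s)^{-1}} \le 1$, and it maps $\H$ into the common domain $\DD$; hence $S_i\tilde{S}(s)^{-1}\Phi$ is well defined for every $\Phi\in\H$.

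Next, on $\DD$ we have $\tilde{S}(t) - \tilde{S}(s) = i\bigl(H(t) - H(s)\bigr) = \sum_{i=1}^n \bigl(f_i(t) - f_i(s)\bigr) S_i$, so that for any $\Phi \in \H$,
\[
    \tilde{S}(t)\tilde{S}(s)^{-1}\Phi - \Phi = \sum_{i=1}^n \bigl(f_i(t) - f_i(s)\bigr)\, S_i\tilde{S}(s)^{-1}\Phi .
\]
Applying Lemma~\ref{lemma:1}, which gives $\norm{S_i\tilde{S}(s)^{-1}\Phi} \le K\norm{\Phi}$ with $K$ independent of $s$ and $\Phi$, I obtain
\[
    \norm{\tilde{S}(t)\tilde{S}(s)^{-1}\Phi - \Phi} \le K\norm{\Phi}\sum_{i=1}^n \abs{f_i(t) - f_i(s)} .
\]
Finally I would invoke \ref{enum:H2-thm-linCombHamiltonian}: each $f_i \in C^1(I)$ is Lipschitz on the compact subinterval $J$ with constant $L_i = \sup_{J}\abs{f_i'}$, whence $\sum_{i=1}^n\abs{f_i(t)-f_i(s)} \le \bigl(\sum_{i=1}^n L_i\bigr)\abs{t-s}$ for $s,t\in J$. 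Combining the last two displays shows $\norm{\tilde{S}(t)\tilde{S}(s)^{-1}\Phi - \Phi} \to 0$ as $t \to s$, at a rate controlled by $\abs{t-s}$ that is the same for every $s \in J$.

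There is no substantial obstacle in this argument; the only point that needs attention is the claimed uniformity in $s$, and that is automatic, since the constant $K$ furnished by Lemma~\ref{lemma:1} does not depend on $s$ and the Lipschitz bound on the $f_i$ is uniform over the compact subinterval $J$. (One could get away with mere uniform continuity of the $f_i$ on $J$ in place of the $C^1$ hypothesis, but keeping the explicit factor $\abs{t-s}$ will be convenient later when verifying hypotheses~\ref{enum:reedSimon-ii} and~\ref{enum:reedSimon-iii} of Theorem~\ref{thm:reed-simon}.)
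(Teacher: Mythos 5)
Your proof is correct and follows essentially the same route as the paper's: the identity $\tilde{S}(t)\tilde{S}(s)^{-1}\Phi - \Phi = \sum_i [f_i(t)-f_i(s)]S_i\tilde{S}(s)^{-1}\Phi$, the uniform bound $\norm{S_i\tilde{S}(s)^{-1}\Phi}\leq K\norm{\Phi}$ from Lemma~\ref{lemma:1}, and the regularity of the $f_i$ on a compact subinterval. The only cosmetic difference is that you extract an explicit Lipschitz rate $\abs{t-s}$ from the $C^1$ hypothesis, whereas the paper invokes only the uniform continuity of the $f_i$; both suffice for the claimed uniformity in $s$.
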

        \begin{proof}
            We need to prove the limit uniformly on $s$, i.e., that
            \[
                \lim_{t \to s} \norm{\tilde{S}(t)\tilde{S}(s)^{-1} \Phi - \Phi} = 0
            \]
            uniformly on $s$. We have
            \[
                \norm{\tilde{S}(t)\tilde{S}(s)^{-1} \Phi - \Phi}
                = \norm{\sum_{i=1}^n[f_i(t) - f_i(s)]S_i \tilde{S}(s)^{-1} \Phi}
                \leq \sum_{i=1}^n |f_i(t) - f_i(s)| \norm{S_i \tilde{S}(s)^{-1}\Phi}.
            \]
            Now, using Lemma~\ref{lemma:1} one gets
            \[
                \norm{\tilde{S}(t)\tilde{S}(s)^{-1} \Phi - \Phi}
                \leq K \sum_{i=1}^n |f_i(t) - f_i(s)| \norm{\Phi}.
            \]
            Hence, $\lim_{t \to s} \norm{\tilde{S}(t)\tilde{S}(s)^{-1} \Phi - \Phi} = 0$ uniformly on $s$ since every $f_i$ is uniformly continuous on $I$ (which follows from \ref{enum:H2-thm-linCombHamiltonian} and the fact that we are considering a fixed, compact subset of $I$).
        \end{proof}

        \begin{proof}[Proof of Theorem~\ref{thm:timedependent-linearCombination-Hamiltonian}]
            This theorem is a consequence of Theorem~\ref{thm:reed-simon} and the fact that \hbox{$S(t) = i H(t)$} is the generator of a contraction semigroup by Hille--Yoshida theorem (see \cite{reed_methods_1975}, Theorem X.47a and Example 1 on \S{X}.8).
            In order to apply Theorem~\ref{thm:reed-simon} we need to have $0 \in \rho(iH(t))$ for every $t$, which is not satisfied in general. However, since $H(t)$ is self-adjoint $i \in \rho(H(t))$ for every $t$ and therefore $-1 \in \rho(S(t))$ which implies $\tilde{S} = S(t) + I$ has 0 in its resolvent set. Note that if $\Phi(t) = \tilde{U}(t,s) \xi$ satisfies
            \[ \Dt \Phi(t) = -\tilde{S}(t) \Phi(t), \qquad \Phi(s) = \xi,\]
            then $\Psi(t) = U(t,s) \xi$ with $U(t,s) \coloneqq \tilde{U}(t,s) e^{-i(s-t)}$ satisfies, by the product rule,
            \[ \Dt \Psi(t) = - S(t) \Psi(t), \qquad \Psi(s) = \xi. \]
            Thus, existence of $\tilde{U}(t,s)$ with the properties in the statement of Theorem~\ref{thm:timedependent-linearCombination-Hamiltonian} guarantee the existence of $U(t,s)$ with the same properties.

            Hence, it is enough to show that $\tilde{S}(t)$ satisfies the hypothesis of Theorem~\ref{thm:reed-simon}. It is clear that $\tilde{S}(t)$ can be written as
            \[
                \tilde{S}(t) = I + i\sum_{j=1}^{n} f_j(t) H_j
                = \sum_{j=1}^{n+1} f_j(t) S_j
            \]
            with $S_i = iH_j$, $f_{n+1} = 1$ and $S_{n+1} = I$. Also is easy to check using Hille--Yoshida theorem that $\tilde{S}(t)$ is the generator of a contraction semigroup.

            Hypothesis \ref{enum:reedSimon-i} of Theorem~\ref{thm:reed-simon} is satisfied since, by definition, every $H(t)$ (and therefore $\tilde{S}(t)$) has the same domain $\DD$.

            Regarding \ref{enum:reedSimon-ii} and \ref{enum:reedSimon-iii}, it is useful to write
            \begin{equation}\label{eq:thm-5.4-1}
                (t-s)^{-1}C(t,s) = (t-s)^{-1} [\tilde{S}(t) - \tilde{S}(s)] \tilde{S}(s)^{-1}
                = \sum_{i=1}^{n} \frac{f_i(t) - f_i(s)}{t - s} S_i \tilde{S}(s)^{-1}.
            \end{equation}
            For convenience, let us denote $g_i(t,s) = \frac{f_i(t) - f_i(s)}{t-s}$, which clearly is $C^1$ in $s$ and $t$ for $t \neq s$ in $I$. Moreover, for $s\neq t$ lying in any fixed compact subinterval of $I$, $g_i$ is uniformly continuous because $f_i(t)$ is $C^1(I)$.

            From the previous equation it follows that for $\Phi\in\H$
            \[
                \norm{(t-s)^{-1}C(t,s)\Phi}
                \leq \sum_{i=1}^{n+1} |g_i(t,s)| \norm{S_i \tilde{S}(s)^{-1}\Phi}
                \leq K \sum_{i=1}^{n+1} |g_i(t,s)| \norm{\Phi},
            \]
            where we have used Lemma~\ref{lemma:1} in the last inequality. For $s\neq t$ lying in any fixed compact subinterval of $I$, $|g_i(t,s)|$ is bounded uniformly on $s$ and $t$ since it is continuous and thus $\norm{(t-s)^{-1}C(t,s)\Phi}$ is uniformly bounded for such $s,t$.

            For the uniform strong continuity with respect to $t$, it is clear that
            \[\begin{alignedat}{2}
                \norm{(t_0-s)^{-1}C(t_0,s)\Phi - (t-s)^{-1}C(t,s)\Phi}
                & \leq \sum_{i=1}^{n+1} |g_i(t_0,s) - g_i(t,s)| \norm{S_i \tilde{S}(s)^{-1}\Phi} \\
                & \leq K \sum_{i=1}^{n+1} |g_i(t_0,s) - g_i(t,s)| \norm{\Phi}
            \end{alignedat}\]
            and, thus, uniform continuity of $t \mapsto g_i(t,s)$ implies uniform strong continuity of the operator-valued function $t \mapsto (t-s)^{-1}C(t,s)$.

            On the other hand, regarding uniform strong continuity respect to $s$ we have
            \begin{equation}\label{eq:thm-5.4-2}
            \begin{alignedat}{2}
                \norm{(t-s_0)^{-1}C(t,s_0)\Phi - (t-s)^{-1}C(t,s)\Phi}
                &\leq \sum_{i=1}^{n+1} \norm{g_i(t,s_0) S_i\tilde{S}(s_0)^{-1}\Phi - g_i(t,s) S_i \tilde{S}(s)^{-1} \Phi}  \\
                &\leq \sum_{i=1}^{n+1} |g_i(t,s_0)| \norm{S_i\tilde{S}(s_0)^{-1}\Phi - S_i \tilde{S}(s)^{-1} \Phi} + \\
                & \phantom{leq} + \sum_{i=1}^{n+1}|g_i(t,s_0) - g_i(t,s)| \norm{S_i\tilde{S}(s)^{-1}\Phi}.
            \end{alignedat}
            \end{equation}
            Let us examine separately the two terms on the right-hand side. First,
            \[
                \sum_{i=1}^{n+1}|g_i(t,s_0) - g_i(t,s)| \norm{S_i \tilde{S}(s)^{-1}\Phi}
                \leq K \sum_{i=1}^{n+1}|g_i(t,s_0) - g_i(t,s)| \norm{\Phi}
            \]
            and therefore because $g_i$ is uniformly continuous for $s\neq t$ in a compact subinterval of $I$, for every $s \neq t$ and every $\varepsilon > 0$ there exists $\delta_1 > 0$ such that for $|s_0 - s| < \delta_1$ it holds
            \[
                \sum_{i=1}^{n+1}|g_i(t,s_0) - g_i(t,s)| \norm{S_i \tilde{S}(s)^{-1}\Phi}
                \leq \frac{\varepsilon}{2}.
            \]

            For the first term in Eq. \eqref{eq:thm-5.4-2},
            \[
                \sum_{i=1}^{n+1} |g_i(t,s_0)| \norm{S_i \tilde{S}(s_0)^{-1}\Phi - S_i \tilde{S}(s)^{-1} \Phi} \leq K\sum_{i=1}^{n+1} |g_i(t,s_0)| \norm{\tilde{S}(s)\tilde{S}(s_0)^{-1}\Phi - \Phi}
            \]
            and thus by Lemma~\ref{lemma:2} and the fact that $g_i$ is uniformly bounded for $s \neq t$, for every $s \neq t$ and every $\varepsilon > 0$ there exists $\delta_2 > 0$ such that for $|s_0 - s| < \delta_2$ it holds
            \[
                \sum_{i=1}^{n+1} |g_i(t,s_0)| \norm{S_i \tilde{S}(s_0)^{-1}\Phi - S_i \tilde{S}(s)^{-1} \Phi} \leq \frac{\varepsilon}{2}.
            \]
            Hence, taking $\delta = \min\{\delta_1, \delta_2\}$ and substituting into Eq. \eqref{eq:thm-5.4-2} we have that for $|s_0 - s| < \delta$
            \[
                \norm{(t-s_0)^{-1}C(t,s_0)\Phi - (t-s)^{-1}C(t,s)\Phi} \leq \varepsilon
            \]
            which shows that hypothesis \ref{enum:reedSimon-ii} is fulfilled.

            Regarding hypothesis \ref{enum:reedSimon-iii} of Theorem~\ref{thm:reed-simon}, it is easy to see that $C(t) \Phi = \sum_{i=1}^{n+1} f_i'(t) S_i \tilde{S}(t)^{-1} \Phi$. Indeed, from Eq.~\ref{eq:thm-5.4-1} we get
            \[\begin{alignedat}{2}
                \norm{(t - s)^{-1} C(t,s) \Phi - \sum_{i=1}^{n+1} f_i'(t) S_i \tilde{S}(t)^{-1} \Phi}
                &= \norm{\sum_{i=1}^{n+1} \left[g_i(t,s) S_i \tilde{S}(s)^{-1} - f_i'(t) S_i \tilde{S}(t)^{-1}\right]\Phi} \\
                & \leq \sum_{i=1}^{n+1} |f_i'(t)|\norm{S_i\tilde{S}(s)^{-1}\Phi - S_i\tilde{S}(t)^{-1}\Phi} + \\
                &\phantom{\leq} + \sum_{i=1}^{n+1} |g_i(t,s) - f_i'(t)| \norm{S_i \tilde{S}(s)^{-1}\Phi}.
            \end{alignedat}\]
            Using again Lemma~\ref{lemma:1}, Lemma~\ref{lemma:2} and the fact that we are considering a compact subinterval, the continuity of every $f_i'$ and the definition of derivative implies the limit $C(t)\Phi = \lim_{s \to t} (t-s)^{-1} C(t,s)\Phi$ exists uniformly on $t$ and is equal to $C(t)\Phi$.

            Boundedness of $C(t)$ as an operator follows directly from Lemma~\ref{lemma:1} and the continuity of $f_i'(t)$:
            \[
                \norm{C(t) \Phi} = \norm{\sum_{i=1}^{n+1} f_i'(t) S_i \tilde{S}(t)^{-1} \Phi}
                \leq \sum_{i=1}^{n+1} |f_i'(t)| \norm{S_i \tilde{S}(t)^{-1} \Phi}
                \leq 2K \sum_{i=1}^{n+1} |f_i'(t)| \norm{\Phi}.
            \]
        \end{proof}

        Besides existence of unitary propagators for Schrödinger equations associated with Hamiltonians of the type we are dealing with, we are going to need a result on how close the evolution induced by two of these Hamiltonians is when they are similar (in the precise sense introduced in Theorem~\ref{thm:aprox-Hamiltonians-aprox-sol}).

        \begin{theorem}\label{thm:aprox-Hamiltonians-aprox-sol}
            Let $H_i$ be symmetric operators with common domain $\DD$. Let $f_i, g_i \in C^{1}(I)$, $i=1,\dots,n$ and $I\subset\R$. Suppose that $H_1(t) = \sum_{i = 1}^n f_i(t) H_i$ and $H_2(t) = \sum_{i = 1}^n g_i(t) H_i$, with common domain $\DD$ are self-adjoint operators, satisfying the hypothesis of Theorem~\ref{thm:timedependent-linearCombination-Hamiltonian}. Then, for every $\Psi \in \DD$, every $T > 0$ and every $\varepsilon > 0$ there exist $\delta> 0$ such that $\norm{f_i - g_i}_\infty < \delta$ implies $\norm{U_1(T, s) \Psi - U_2(T, s) \Psi} < \varepsilon$.
        \end{theorem}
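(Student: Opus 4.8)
The plan is to compare the two evolutions through a Duhamel (variation-of-constants) identity and then estimate the resulting integrand by means of Lemma~\ref{lemma:1} and hypothesis~\ref{enum:H3-thm-linCombHamiltonian} of Theorem~\ref{thm:timedependent-linearCombination-Hamiltonian}. Both propagators $U_1$ and $U_2$ exist by Theorem~\ref{thm:timedependent-linearCombination-Hamiltonian}. Fix $\Psi\in\DD$ and assume, without loss of generality, that $s\le T$ with $[s,T]\subset I$. For $\Psi\in\DD$ the curve $\tau\mapsto U_2(\tau,s)\Psi$ remains in $\DD$ and solves $\partial_\tau U_2(\tau,s)\Psi=-iH_2(\tau)U_2(\tau,s)\Psi$, while the cocycle property $U_1(r,s)U_1(s,t)=U_1(r,t)$ and the forward equation of Theorem~\ref{thm:timedependent-linearCombination-Hamiltonian} together yield the backward equation $\partial_\tau U_1(T,\tau)\phi=iU_1(T,\tau)H_1(\tau)\phi$ for $\phi\in\DD$. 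Setting $G(\tau):=U_1(T,\tau)U_2(\tau,s)\Psi$, so that $G(s)=U_1(T,s)\Psi$ and $G(T)=U_2(T,s)\Psi$, the product rule gives $G'(\tau)=iU_1(T,\tau)\bigl[H_1(\tau)-H_2(\tau)\bigr]U_2(\tau,s)\Psi$, and integrating --- using that $U_1(T,\tau)$ is unitary and $H_1(\tau)-H_2(\tau)=\sum_{i=1}^n\bigl(f_i(\tau)-g_i(\tau)\bigr)H_i$ --- I obtain
\[
  \norm{U_1(T,s)\Psi-U_2(T,s)\Psi}\le\Bigl(\max_{1\le i\le n}\norm{f_i-g_i}_\infty\Bigr)\int_s^T\sum_{i=1}^n\norm{H_i\,U_2(\tau,s)\Psi}\,\d\tau .
\]

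The heart of the matter is then to show that
\[
  M:=\sup_{\tau\in[s,T]}\sum_{i=1}^n\norm{H_i\,U_2(\tau,s)\Psi}<\infty ,
\]
noting that $M$ depends only on $H_2$, $\Psi$ and $T$, not on the $f_i$. With $\tilde S(\tau):=iH_2(\tau)+I$ as in Lemma~\ref{lemma:1}, that lemma gives $\norm{H_i\tilde S(\tau)^{-1}}\le K$ uniformly in $\tau$, so it is enough to bound $\psi(\tau):=\tilde S(\tau)U_2(\tau,s)\Psi$. Using $-iH_2(\tau)=I-\tilde S(\tau)$ and $\tilde S'(\tau)=i\sum_i g_i'(\tau)H_i$, a formal differentiation gives $\psi'(\tau)=-\tilde S(\tau)\psi(\tau)+\psi(\tau)+\tilde S'(\tau)U_2(\tau,s)\Psi$; since $-\tilde S(\tau)$ generates a contraction semigroup, whose propagator $\tilde U$ exists and is contractive exactly as in the proof of Theorem~\ref{thm:timedependent-linearCombination-Hamiltonian} (applied to $H_2$), the inhomogeneous Duhamel formula for this equation, together with the bound $\norm{\tilde S'(\tau)U_2(\tau,s)\Psi}\le K\bigl(\sum_i|g_i'(\tau)|\bigr)\norm{\psi(\tau)}$ from Lemma~\ref{lemma:1} and Gr\"onwall's inequality, yields $\norm{\psi(\tau)}\le\norm{\tilde S(s)\Psi}\,e^{c(T-s)}$ with $c=1+K\sup_{[s,T]}\sum_i|g_i'|<\infty$. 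As $\Psi\in\DD$ one has $\norm{\tilde S(s)\Psi}<\infty$, hence $M\le nK\norm{\tilde S(s)\Psi}\,e^{c(T-s)}<\infty$. The delicate point is that $U_2(\tau,s)\Psi$ need not lie in $\dom H_2(\tau)^2$, so the differentiation of $\psi$ is only formal; I would make the estimate rigorous through a standard regularization --- Yosida-type truncations of $\tilde S(\tau)$, or the Reed--Simon approximants $U_{2,k}$ of Equation~\eqref{eq:RSpropagator}, on each piece of whose partition $H_2$ is frozen and hence commutes exactly with the generated semigroup --- followed by passage to the limit, using $U_{2,k}(\tau,s)\Psi\to U_2(\tau,s)\Psi$ and the closedness of $H_2(\tau)$. (Alternatively, if one grants that the solution provided by Theorem~\ref{thm:reed-simon} is $C^1$ in $\tau$, then $\tau\mapsto H_2(\tau)U_2(\tau,s)\Psi=i\,\partial_\tau U_2(\tau,s)\Psi$ is continuous on $[s,T]$, hence bounded, and $\norm{H_iU_2(\tau,s)\Psi}\le K\bigl(\norm{H_2(\tau)U_2(\tau,s)\Psi}+\norm{\Psi}\bigr)$ by hypothesis~\ref{enum:H3-thm-linCombHamiltonian}.)

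Combining the two displays,
\[
  \norm{U_1(T,s)\Psi-U_2(T,s)\Psi}\le (T-s)\,M\,\max_{1\le i\le n}\norm{f_i-g_i}_\infty ,
\]
so, given $\varepsilon>0$, it suffices to take $\delta:=\varepsilon/\bigl(1+(T-s)M\bigr)$: if $\norm{f_i-g_i}_\infty<\delta$ for every $i$, the right-hand side stays strictly below $\varepsilon$. I expect the only genuinely nontrivial ingredient to be the uniform bound $M<\infty$ --- i.e.\ keeping the graph norm (``energy'') of the evolved state under control along the reference evolution $U_2$ --- together with its companion, the continuity of $\tau\mapsto H_iU_2(\tau,s)\Psi$, which is what legitimizes applying the fundamental theorem of calculus to $G$; everything else is bookkeeping on top of the Duhamel identity.
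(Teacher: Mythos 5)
Your proof is essentially correct and takes a genuinely different --- and in fact more careful --- route than the paper. The paper's own proof applies the fundamental theorem of calculus directly to $t\mapsto\norm{U_1(t,s)\Psi-U_2(t,s)\Psi}$ and then writes $\norm{\Dt U_1(t,s)\Psi-\Dt U_2(t,s)\Psi}=\norm{H_1(t)\Psi-H_2(t)\Psi}$, silently replacing $H_\l(t)U_\l(t,s)\Psi$ by $H_\l(t)\Psi$; this yields the clean bound $(T-s)\sum_i\norm{f_i-g_i}_\infty\norm{H_i\Psi}$ and the explicit $\delta=\inf_i\varepsilon/(n(T-s)\norm{H_i\Psi})$, but the substitution is only legitimate if the propagators commute with the $H_i$, which is not the case in general. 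Your Duhamel interpolation $G(\tau)=U_1(T,\tau)U_2(\tau,s)\Psi$ is the standard correct way to set this up: it produces the bound $\int_s^T\norm{(H_1(\tau)-H_2(\tau))U_2(\tau,s)\Psi}\,\d\tau$, whose price is precisely the ingredient you identify as the heart of the matter, namely the uniform graph-norm bound $M=\sup_\tau\sum_i\norm{H_iU_2(\tau,s)\Psi}<\infty$ along the reference trajectory. Your two routes to $M$ (the Gr\"onwall estimate on $\tilde S(\tau)U_2(\tau,s)\Psi$ with a regularization via the Reed--Simon approximants, or the continuity of $\tau\mapsto i\,\Dt U_2(\tau,s)\Psi$ combined with hypothesis \ref{enum:H3-thm-linCombHamiltonian}) are both viable, and the second is the quickest to make rigorous given what Theorem~\ref{thm:reed-simon} provides. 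The only trade-offs worth noting are that your $\delta$ depends on the reference evolution $U_2$ through $M$ (hence on the $g_i$ and $\sup|g_i'|$), whereas the paper's nominal $\delta$ depends only on $\norm{H_i\Psi}$ --- harmless for the applications in Section~\ref{sec:approximatecontrol}, where the reference system is fixed --- and that the backward differentiability $\partial_\tau U_1(T,\tau)\phi=iU_1(T,\tau)H_1(\tau)\phi$ for arbitrary $\phi\in\DD$ deserves a sentence of justification from the cocycle property. In short: same Gr\"onwall-type conclusion, but your decomposition repairs a real gap in the published argument at the cost of the a priori energy bound $M$.
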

        \begin{proof}
            By Theorem~\ref{thm:timedependent-linearCombination-Hamiltonian}, there exist unitary propagators $U_1(t, s)$, $U_2(t, s)$ associated with $H_1(t)$, $H_2(t)$ respectively. Since for any $\Psi \in \DD$, $t \mapsto U_\l(t, s) \Psi$ is strongly differentiable ($\l = 1, 2$), we have $t \mapsto \norm{U_1 (t, s) \Psi - U_2 (t, s) \Psi}$ is differentiable and by the Fundamental Theorem of Calculus we have
            \[
                \norm{U_1 (T, s) \Psi - U_2 (T, s) \Psi}
                = \int_s^T \Dt \norm{U_1 (t, s) \Psi - U_2 (t, s) \Psi} dt.
            \]
            Strong differentiability implies that we can take the derivative into the norm and get
            \[ \begin{alignedat}{2}
                \norm{U_1 (T, s) \Psi - U_2 (T, s) \Psi}
                &= \int_s^T \norm{\Dt U_1 (t, s) \Psi - \Dt U_2 (t, s) \Psi} dt \\
                &= \int_s^T \norm{H_1(t) \Psi - H_2(t) \Psi} dt \\
                &\leq \sum_{i=1}^n \int_s^T \abs{f_i - g_i} \norm{H_i \Psi} dt \\
                & \leq (T - s) \sum_{i=1}^n \norm{f_i - g_i}_\infty \norm{H_i \Psi}
            \end{alignedat} \]
            Hence, it is enough to take
            \[ \delta =\inf_i \frac{\varepsilon}{n (T - s) \norm{H_i \Psi}}. \]
        \end{proof}

%%%%%%%%%%%%%%%%%%%%%%%%%%%%%%%%%%%%%%%%%%%%%%%%%%%%%%%%%%%%%%%%%%%%%%%%%%%%%%%%%%%%%%%%%%%%%%%%%%%%%%%%%%%%%%%%%%%%%%%%%%%%%%%%%%%%%%%%%%%%%%%%%%%%%%%%%%%%%%%%%%%%%%%%%%%%%%%%%%%%%%%%%%%%%%%%%%%%%%%%%%%%%%%%%%%%%%%%%%%%%%%%%%%%%%%%%%%%%%%%%%%

\section{Approximate Controllability of Boundary Control Systems}\label{sec:approximatecontrol}

    Approximate controllability of the boundary control system is, by Proposition~\ref{prop:equivmagneticLaplacian}, equivalent to approximate controllability of a quantum system with Hamiltonian

            \[ H(t) = -\left[\left(\Dx - i A(t)\right)^2 + A'(t) x \right] \]
            and periodic boundary conditions.

            Unlike control on a finite dimensional Hilbert space, control on an infinite dimensional Hilbert space has no general result giving necessary and sufficient conditions for (approximate) controllability. However, it will be enough for us to rely on a theorem by T. Chambrion et al.\ \cite{chambrion_controllability_2009}, giving sufficient conditions to prove approximate controllability for the boundary control system. In the referenced work it is studied the approximate controllability of some linear control systems; that is, systems whose evolution is given by
            \begin{equation}\label{eq:bilinear-schrodinger-control}
                i\Dt \Psi(t) = H_0 \Psi(t) + u(t) H_1 \Psi(t),
            \end{equation}
            with $u: \R \to (0, c)$. Moreover, they assume that:
            \begin{enumerate}[%
                    label=\textit{(A\arabic*)},
                    nosep,
                    labelindent=0.5\parindent,
                    leftmargin=*
                    ]
                \item\label{H1} $H_0, H_1$ are self-adjoint operators not depending on $t$,
                \item\label{H2} there exists an orthonormal basis $\{\phi_n\}_{n \in \N}$ of $\H$ made of eigenvectors of $H_0$, and
                \item\label{H3} $\phi_n \in \dom H_1$ for every $n \in \N$.
            \end{enumerate}
            Control systems satisfying conditions \ref{H1}--\ref{H3} will be called \textbf{normal quantum control systems}. For them, the following theorem is proven:
            \begin{theorem}[Chambrion et al.\ {\cite[Thm. 2.4]{chambrion_controllability_2009}}.] \label{thm:chambrion-controllability}
                Consider a normal quantum control system, with $c > 0$ as described above. Let $\{\lambda_n\}_{n \in \N}$ denote the eigenvalues of $H_0$, each of them associated to the eigenfunction $\phi_n$. Then, if the elements of the sequence $\{\lambda_{n+1} - \lambda_n\}_{n \in \N}$ are $\Q$-linearly independent and if $\langle \phi_{n+1}, H_1 \phi_n \rangle \neq 0$ for every $n \in \N$, the system is approximately controllable.
            \end{theorem}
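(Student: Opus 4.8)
I would not reprove this statement in the body of the paper: it is quoted verbatim from Chambrion et al.\ \cite{chambrion_controllability_2009}, and in our context the work that remains is to verify its hypotheses \ref{H1}--\ref{H3} together with the two spectral conditions for the magnetic Laplacian of Proposition~\ref{prop:equivmagneticLaplacian}. Were one to reconstruct the proof, the plan is to follow the Galerkin-plus-averaging strategy. First I would pass to the interaction picture: writing $\Psi(t) = e^{-itH_0}\Phi(t)$, the equation $i\dot\Psi = H_0\Psi + u(t)H_1\Psi$ becomes $i\dot\Phi = u(t)\widetilde{H}_1(t)\Phi$ with $\widetilde{H}_1(t) = e^{itH_0}H_1e^{-itH_0}$, whose matrix entries in the basis $\{\phi_n\}$ are $\langle \phi_j, \widetilde{H}_1(t)\phi_k\rangle = \langle \phi_j, H_1\phi_k\rangle\,e^{it(\lambda_j - \lambda_k)}$. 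The $\mathbb{Q}$-linear independence of the consecutive gaps $\lambda_{n+1}-\lambda_n$ is exactly what guarantees that all differences $\lambda_j-\lambda_k$ ($j\neq k$) are pairwise distinct and nonzero, so that no spurious resonances occur.

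Second, for fixed $N$ I would consider the Galerkin truncation $\Pi_N$ onto $\mathrm{span}(\phi_1,\dots,\phi_N)$, a right-invariant bilinear control system on the sphere $S^{2N-1}$. Driving with a control of the form $u(t) = \overline{u} + v\cos\big((\lambda_{n+1}-\lambda_n)t + \theta\big)$, where the offset $\overline{u}$ and amplitude $v$ are chosen to keep $u$ inside $(0,c)$, first-order averaging over a long time window cancels every entry of $\widetilde{H}_1(t)$ except the $(n,n+1)$ and $(n+1,n)$ ones; since $\langle\phi_{n+1},H_1\phi_n\rangle\neq 0$, the truncated propagator then realizes, to leading order, an arbitrary unitary rotation mixing $\phi_n$ and $\phi_{n+1}$ with controllable relative phase, with averaging error controlled by standard estimates and made small by lengthening the window. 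Concatenating such elementary two-level rotations along the chain $1\!-\!2\!-\!\cdots\!-\!N$ — which is connected precisely because all consecutive coupling elements are nonzero — one obtains a transitive action on $S^{2N-1}$ up to the global phase, hence can steer $\Pi_N\Psi_0$ to within any prescribed accuracy of $\Pi_N\Psi_T$ inside the truncated sphere.

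The step I expect to be the real obstacle is upgrading this finite-dimensional conclusion to the infinite-dimensional one: one must show that the Galerkin family is a \emph{good} approximation, uniformly over the class of controls actually used. Concretely, for every $\eta>0$ there should be $N$ and a bound on the control effort so that $\|\Psi(t) - (\text{truncated evolution of }\Pi_N\Psi_0)\| < \eta$ on the relevant time interval. This is proved by estimating $\frac{d}{dt}\|(I-\Pi_N)\Psi(t)\|$, using that $\Psi_0,\Psi_T\in\mathcal{P}(\mathcal{H})$ have tails $\|(I-\Pi_N)\Psi_0\|\to 0$ and that the constructed controls have uniformly bounded $L^1$-norm, so the dynamics can transfer only a controlled amount of mass above level $N$ per unit of control time; hypotheses \ref{H1}--\ref{H3} are what make this estimate available.

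Finally, assembly is a triangle inequality: given $\epsilon>0$, pick $N$ with $\|(I-\Pi_N)\Psi_0\|,\|(I-\Pi_N)\Psi_T\|<\epsilon/4$; build the piecewise-sinusoidal control of the second step transferring the truncation to within $\epsilon/4$ of $\Pi_N\Psi_T$ at some time $T$; invoke the good-Galerkin estimate to keep the genuine solution within $\epsilon/4$ of the truncated one; then $\|\Psi(T)-\Psi_T\|<\epsilon$, so $\mathcal{R}_{\Psi_0}(T)\cap B_\epsilon(\Psi_T)\neq\emptyset$, which is approximate controllability. As noted, all of this is the content of \cite{chambrion_controllability_2009}; what the next section has to supply is the computation of the spectrum and of the matrix elements $\langle\phi_{n+1},x\,\phi_n\rangle$ for the periodic magnetic Laplacian, and a check that the control $A(t)$ (together with the $A'(t)x$ term) can be recast in the normal-quantum-control form required by the theorem.
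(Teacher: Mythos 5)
Your decision not to reprove this result matches the paper exactly: Theorem~\ref{thm:chambrion-controllability} is imported verbatim from Chambrion et al.\ \cite{chambrion_controllability_2009} and the paper supplies no proof of it, only a verification of its hypotheses for the magnetic Laplacian in the subsequent section. Your sketch of the Galerkin-plus-averaging strategy is a fair summary of the cited proof, but it plays no role in the paper itself, so there is nothing to compare beyond agreement on deferring to the reference.
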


            Based on this theorem we will prove the main result of this work which ensures the approximate controllability of the boundary control systems. This is the first instance in which controllability of a system using boundary controls is considered. Before doing that it is convenient to introduce the following lemma:
            % TODO: Perhaps we can say that (iii) of thm:timedependent-linearCombination-Hamiltonia is satisfied for H_*p are bounded.
            \begin{lemma}\label{lemma:chambrion-thm-applies}
                Consider a normal quantum control system $i \Dt \Psi = H_0 \Psi + u(t) H_1 \Psi$ with $H_0$, $H_1$ such that $H(t) = H_0 + u(t)H_1$ satisfies hypothesis of Theorem~\ref{thm:timedependent-linearCombination-Hamiltonian}. Then given any $\varepsilon > 0$ there exist perturbed Hamiltonians $\tilde{H}_0$, $\tilde{H}_1$ with the same domain as $H_0$ such that they satisfy the conditions of Theorem~\ref{thm:timedependent-linearCombination-Hamiltonian} and also those of Theorem~\ref{thm:chambrion-controllability} and such that for every $t > s$ and every $C^1$ piecewise function $u:[s,t] \to \R$, it holds:
                \[
                    \norm{U(t, s) \Psi - \tilde{U}(t,s) \Psi} < \varepsilon, \qquad
                    (\forall \Psi \in \dom H_0),
                \]
                where we denote by $U(t,s)$ and $\tilde{U}(t,s)$ the unitary propagators associated with $H(t)$ and $\tilde{H}(t) = \tilde{H}_0 + u(t) \tilde{H}_1$ respectively.
            \end{lemma}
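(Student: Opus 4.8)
The plan is to construct the perturbed operators $\tilde H_0$ and $\tilde H_1$ by perturbing only the spectral data of $H_0$ and the off-diagonal matrix elements of $H_1$ by arbitrarily small amounts, so that the Chambrion conditions ($\Q$-linear independence of consecutive eigenvalue gaps, and non-vanishing of $\langle\phi_{n+1},H_1\phi_n\rangle$) are met, while keeping everything close enough that Theorem~\ref{thm:aprox-Hamiltonians-aprox-sol} applies. Concretely, let $\{\phi_n\}$ be the orthonormal eigenbasis of $H_0$ with eigenvalues $\{\lambda_n\}$ guaranteed by \ref{H2}. First I would define $\tilde H_0$ to be the self-adjoint operator with the same eigenbasis $\{\phi_n\}$ but eigenvalues $\tilde\lambda_n = \lambda_n + \eta_n$, where the perturbations $\eta_n$ are chosen small (say $|\eta_n|<\delta'$ for a $\delta'$ to be fixed later) and such that the gaps $\tilde\lambda_{n+1}-\tilde\lambda_n$ are $\Q$-linearly independent; such a choice exists because the set of sequences violating $\Q$-linear independence is a countable union of proper closed subspaces (hyperplanes) and hence meagre, so a generic arbitrarily small perturbation works. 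Since $\dom\tilde H_0 = \{\Psi : \sum |\tilde\lambda_n|^2|\langle\phi_n,\Psi\rangle|^2 <\infty\}$ coincides with $\dom H_0$ when the $\eta_n$ are bounded, the common-domain requirement is preserved.

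Next I would perturb $H_1$. If some matrix elements $\langle\phi_{n+1},H_1\phi_n\rangle$ vanish, add a small bounded self-adjoint correction $\tilde H_1 = H_1 + B$, where $B$ is the bounded self-adjoint operator whose only nonzero matrix elements in the $\{\phi_n\}$ basis are $\langle\phi_{n+1},B\phi_n\rangle = \overline{\langle\phi_n,B\phi_{n+1}\rangle} = \beta_n$, with $\beta_n$ chosen so that $\langle\phi_{n+1},\tilde H_1\phi_n\rangle\neq 0$ for all $n$ and $\sum|\beta_n|^2$ (hence $\|B\|$) as small as we like — for instance a rapidly decaying sequence. Because $B$ is bounded, $\dom\tilde H_1\supseteq\dom H_1$, condition \ref{H3} ($\phi_n\in\dom\tilde H_1$) is preserved, and $\tilde H(t)=\tilde H_0 + u(t)\tilde H_1$ is still self-adjoint on $\DD$ (self-adjoint plus bounded self-adjoint), so hypothesis \ref{enum:H1-thm-linCombHamiltonian} of Theorem~\ref{thm:timedependent-linearCombination-Hamiltonian} holds; hypothesis \ref{enum:H2-thm-linCombHamiltonian} is unchanged since the control functions $1$ and $u$ are the same; and for hypothesis \ref{enum:H3-thm-linCombHamiltonian} the relative-bound estimate for $\tilde H_0,\tilde H_1$ follows from that of $H_0,H_1$ together with the boundedness of the spectral shift and of $B$, at the cost of enlarging $K$. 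Thus $\tilde H_0,\tilde H_1$ satisfy both Theorem~\ref{thm:timedependent-linearCombination-Hamiltonian} and Theorem~\ref{thm:chambrion-controllability}.

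Finally, to get the propagator estimate, apply Theorem~\ref{thm:aprox-Hamiltonians-aprox-sol} with the same control functions $f_1=1$, $f_2=u$ on both sides but with the operator family written so that the perturbation appears as a coefficient perturbation: write $H(t)=1\cdot H_0 + u(t) H_1$ and $\tilde H(t) = 1\cdot \tilde H_0 + u(t)\tilde H_1$. Since Theorem~\ref{thm:aprox-Hamiltonians-aprox-sol} as stated compares $\sum f_i H_i$ with $\sum g_i H_i$ for a \emph{fixed} family $H_i$, I would instead rerun its short proof directly: by strong differentiability, $\|U(t,s)\Psi - \tilde U(t,s)\Psi\| \le \int_s^t \|H(r)\Psi - \tilde H(r)\Psi\|\,dr \le (t-s)\big(\|(\tilde H_0 - H_0)\Psi\| + \|u\|_\infty\|(\tilde H_1-H_1)\Psi\|\big)$ — wait, this would need $\|u\|_\infty$ bounded, which the statement does not assume uniformly. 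The clean fix is to bound, for fixed $\Psi\in\dom H_0$ and fixed $t>s$, $\|(\tilde H_0-H_0)\Psi\|^2 = \sum|\eta_n|^2|\langle\phi_n,\Psi\rangle|^2 \le \delta'^2\|\Psi\|^2$ and $\|(\tilde H_1 - H_1)\Psi\| = \|B\Psi\|\le \|B\|\,\|\Psi\|$, and absorb the $u$-dependence by choosing the $\eta_n$ and $\beta_n$ small relative to a given control. For the statement as phrased — a single $\varepsilon$, $\Psi$, $t>s$ — this is harmless: fix the desired $\varepsilon$, and since for a $C^1$ piecewise $u$ on the compact $[s,t]$ we have $\|u\|_\infty<\infty$, choose $\delta'$ and $\|B\|$ so small that $(t-s)(\delta' + \|u\|_\infty\|B\|)\|\Psi\| < \varepsilon$. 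The main obstacle is precisely this last bookkeeping point: the Chambrion framework needs $\tilde H_0,\tilde H_1$ chosen \emph{independently} of the control $u$, but the closeness of propagators involves $\|u\|_\infty$; the resolution is that the lemma quantifies over a fixed $t>s$ and a fixed (hence bounded) $u$, so we may pick the perturbation last. One should also verify that the generic small spectral perturbation achieving $\Q$-linear independence can be taken with $\eta_n\to 0$ fast enough to stay in a ball of radius $\delta'$ in $\ell^\infty$ — this is immediate from the meagreness argument applied within that ball.
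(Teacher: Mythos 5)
Your construction is essentially the paper's own proof: the authors likewise add a small bounded diagonal perturbation $\mu_0\sum_k\nu_k\,\phi_k\phi_k^\dagger$ to $H_0$ (with $\nu_k<2^{-k}$ rationally independent) and a small bounded off-diagonal perturbation $\mu_1\sum_{n}\alpha_n\,\phi_{n+1}\phi_n^\dagger$ to $H_1$, and then control the propagators via Theorem~\ref{thm:aprox-Hamiltonians-aprox-sol} by taking $\mu_0,\mu_1$ small. Your two deviations --- the Baire-category argument for the eigenvalue shifts in place of an explicit rationally independent sequence, and symmetrizing the off-diagonal perturbation so that $\tilde H_1$ is genuinely self-adjoint --- are minor refinements (the second repairs a small oversight in the paper), and the quantifier issue you flag concerning the dependence of the perturbation size on $\norm{u}_\infty$ and $t-s$ is present in the paper's argument as well and is resolved there in the same way, by fixing the control before choosing the perturbation.
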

            \begin{proof}
                Let $\lambda_k, \phi_k$ denote the eigenpairs of $H_0$. If it is the case that $\Q$-linear independence condition of Theorem~\ref{thm:chambrion-controllability} hold, we set $\tilde{H}_0 = H_0$; otherwise take an increasing sequence of positive irrational numbers $\nu_k$ such that they are rationally independent and $\nu_k < 2^{-k}$. Then define
                \[ H_{0,p} = \sum_{k \in \N} \nu_k \phi_k \phi_k^\dagger. \]
                Obviously $H_{0,p}$ has the same domain as $H_0$ because $\norm{H_{0,p}\Psi}^2 \leq \sum_k 2^{-2k}  |\langle \phi_k, \Psi \rangle |^2$, from what we get $\norm{H_{0,p}\Psi} \leq \norm{\Psi}$ and $H_{0,p}$ can be chosen to have the same domain as $H_0$.

                Define $\tilde{H}_0 = H_0 + \mu_0 H_{0,p}$ with $\mu_0 \in \Q$. Then $\tilde{H}_0$ has eigenvalues $\lambda_k + \mu_0 \nu_k$ satisfying the rationally independence condition of Theorem~\ref{thm:chambrion-controllability}.

                If $H_1$ is such that $\langle \phi_{n+1}, H_1 \phi_n \rangle = 0$ for $n \in N \subset \N$, take a sequence of positive, non-vanishing terms $\{\alpha_n\}_{n \in N}$ such that $\alpha_n < 2^{-n}$ and define
                \[ H_{1,p} = \sum_{n \in N} \alpha_n \phi_{n+1} \phi_n^\dagger. \]
                Again the domain of $H_{1,p}$ can be chosen to be $\dom H_0$, since $\norm{H_{1,p} \Psi}^2 \leq \sum_{n \in N} 2^{-2n} |\langle \phi_n, \Psi \rangle|^2$ and thus $\norm{H_{1,p}\Psi} \leq \norm{\Psi}$.

                Defining $\tilde{H}_1 = H_1 + \mu_1 H_{1,p}$ with $\mu_1$ real it is clear that $\tilde{H}_1$ satisfies $\langle \phi_{n+1}, \tilde{H}_1 \phi_n \rangle \neq 0$ for any $n \in \N$.

                % TODO: perhaps add that we consider H_0, H_1, H_0p, H_1p
                From what we already said, taking into account that $H_{0,p}$ and $H_{1,p}$ are bounded, it follows that if $H(t)$ satisfies the hypothesis of Theorem~\ref{thm:aprox-Hamiltonians-aprox-sol}, so does $\tilde{H}(t)$ on each interval in which $u(\tau)$ is $C^1$ and therefore, taking $\mu_0$ and $\mu_1$ small enough we have
                \[
                    \norm{U(t, s) \Psi - \tilde{U}(t,s) \Psi} < \varepsilon, \qquad
                    \text{for all } \Psi \in \dom H_0.
                \]
            \end{proof}

            Suppose that $A\in C^2(I) $ defines the time dependent magnetic vector potential. Then $H(t)$ fulfills all the hypothesis of Theorem~\ref{thm:timedependent-linearCombination-Hamiltonian} but \ref{enum:H3-thm-linCombHamiltonian}, which requires some work to prove. The following lemma shows that the families of Hamiltonians $H(t)$ that we consider, i.e., those on Proposition~\ref{prop:equivmagneticLaplacian}, satisfy hypothesis $\ref{enum:H3-thm-linCombHamiltonian}$ of Theorem~\ref{thm:timedependent-linearCombination-Hamiltonian} and therefore have well defined evolutions.

            %TODO: Perhaps add that in our case ||H_i \Psi|| \leq K ||d^2\Psi / dx^2 ||
            \begin{lemma}\label{lemma:magnetic-Laplacian-satisfy-thm5.3}
                Let $A: L \to \mathbb{R}$ be a constant magnetic vector potential, and denote by $-D_A^2$ a self-adjoint extension of the associated magnetic Laplacian, whose domain we denote by $\mathcal{D} \subset \mathcal{H}^2(L)$. Let $r > 0$, and suppose that $ |A| < r$. Then there exists a constant $K$ (not depending on $A$) such that
                \[
                    \norm{\D[2]{\Psi}{x}} \leq K \left(\norm{D_A^2 \Psi} + \norm{\Psi}\right)
                \]
                for all $\Psi \in \mathcal{D}$.
            \end{lemma}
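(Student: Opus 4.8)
The plan is to reduce the statement to a standard interpolation inequality on the bounded interval $L$. Since $A$ is constant, expanding the square gives $-D_A^2 = -\D[2]{}{x} + 2iA\,\D{}{x} + A^2$, so that for every $\Psi\in\mathcal{D}\subset\mathcal{H}^2(L)$ one has the identity
\[ \D[2]{\Psi}{x} = D_A^2\Psi + 2iA\,\D{\Psi}{x} + A^2\Psi. \]
Taking norms and using $|A|<r$, the triangle inequality yields
\[ \norm{\D[2]{\Psi}{x}} \le \norm{D_A^2\Psi} + 2r\norm{\D{\Psi}{x}} + r^2\norm{\Psi}. \]
Thus the whole problem reduces to controlling the first-derivative term $\norm{\D{\Psi}{x}}$ by the second-derivative term and $\norm{\Psi}$, with a constant independent of both $A$ and the particular self-adjoint extension.

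The key ingredient I would invoke is the interpolation (Ehrling / intermediate-derivative) inequality on the compact interval $L$: for every $\epsilon>0$ there is a constant $C_\epsilon>0$, depending only on $\epsilon$ and $l$, such that
\[ \norm{\D{\Psi}{x}} \le \epsilon\norm{\D[2]{\Psi}{x}} + C_\epsilon\norm{\Psi} \]
for all $\Psi\in\mathcal{H}^2(L)$. Crucially this holds on the full Sobolev space, with no boundary conditions imposed, so it applies verbatim to the domain $\mathcal{D}$ of an arbitrary self-adjoint extension. If one prefers a self-contained argument, this inequality follows from the compactness of the embedding of $\mathcal{H}^2(L)$ into $\mathcal{H}^1(L)$ together with the boundedness of the embedding of $\mathcal{H}^1(L)$ into $\mathcal{L}^2(L)$ by a routine contradiction-and-compactness argument; alternatively one quotes its classical explicit form.

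With this in hand the proof finishes by absorption. Choosing $\epsilon = 1/(4r)$ so that $2r\epsilon = \tfrac12$ and substituting the interpolation bound into the first displayed inequality gives
\[ \norm{\D[2]{\Psi}{x}} \le \norm{D_A^2\Psi} + \tfrac12\norm{\D[2]{\Psi}{x}} + \left(2rC_{1/(4r)} + r^2\right)\norm{\Psi}, \]
and moving the $\tfrac12\norm{\D[2]{\Psi}{x}}$ term to the left-hand side yields
\[ \norm{\D[2]{\Psi}{x}} \le 2\norm{D_A^2\Psi} + \left(4rC_{1/(4r)} + 2r^2\right)\norm{\Psi} \le K\left(\norm{D_A^2\Psi} + \norm{\Psi}\right), \]
with $K = \max\{2,\, 4rC_{1/(4r)} + 2r^2\}$, which depends only on $r$ and the length $l$ of $L$, and neither on $A$ nor on the chosen extension. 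The only genuinely non-trivial step is the interpolation inequality; everything else is elementary algebra. The subtle point to watch is the $A$-independence of $K$: it survives precisely because the bound $|A|<r$ controls the $A$-dependent coefficients $2|A|$ and $A^2$, and because $\epsilon$ is selected as a function of $r$ alone.
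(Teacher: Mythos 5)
Your proof is correct, but it takes a genuinely different route from the paper. The paper argues in three stages: first it fixes $A$ and obtains a constant $K_A=\norm{\Dx[2](D_A^2+iI)^{-1}}$ via closedness of $\Dx[2]$ and the Closed Graph Theorem (this is where self-adjointness of the extension is used, to invert $D_A^2+iI$); then it perturbs to nearby potentials $B$ with $|A-B|\leq\varepsilon_A$ using the non-quantitative interpolation bound $\norm{\D{\Psi}{x}}\leq\tilde K(\norm{\D[2]{\Psi}{x}}+\norm{\Psi})$ and an absorption step; finally it covers the compact set $\{|A|\leq r\}$ by finitely many such neighbourhoods and takes the maximum of the constants. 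You instead expand the square once and for all, invoke the $\epsilon$-form of the same interpolation inequality (which is in fact what \cite[Thm. 5.2]{adams_sobolev_2003}, the paper's own reference, provides), and absorb directly with $\epsilon=1/(4r)$. Your argument is shorter, yields an explicit constant $K=\max\{2,\,4rC_{1/(4r)}+2r^2\}$ rather than one produced non-constructively by a covering argument, and uses neither the Closed Graph Theorem nor self-adjointness of the extension --- only the inclusion $\mathcal{D}\subset\mathcal{H}^2(L)$, so it applies verbatim to any restriction of the maximal operator with $\mathcal{H}^2$ domain. The one step you must genuinely justify is the uniform validity of $\norm{\D{\Psi}{x}}\leq\epsilon\norm{\D[2]{\Psi}{x}}+C_\epsilon\norm{\Psi}$ on all of $\mathcal{H}^2(L)$ without boundary conditions, and your appeal to Ehrling's lemma via the compact embedding $\mathcal{H}^2(L)\hookrightarrow\mathcal{H}^1(L)$ (or to the classical explicit inequality) settles that.
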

            \begin{proof}
                We will prove this lemma in three steps. First, we will show that for every constant vector potential $A\in \H^1(L)$ the bound in the lemma stands with constant $K_A$ depending on $A$. In fact, since the magnetic Laplacian is self-adjoint, $-D_A^2 + iI$ is invertible with bounded inverse and maps $\H$ onto $\mathcal{D}\subset\H^2(L)$. Also $-\Dx[2]$ is closed in $\mathcal{H}^2(L)$ since it is the adjoint of the standard Laplacian with the minimal symmetric domain ($i.e.$, with domain $\mathcal{D}_0 = \{\Psi \in \mathcal{H}^2(L)\mid  \veval{\Psi} = 0, \veval{\frac{\d\Psi}{\d x}} = 0\}$) as is well-known (see, e.g., \cite{lions_problemes_1968}). By the Closed Graph Theorem this implies that
                $$\Dx[2] (D_A^2 + iI)^{-1}$$
                 is a bounded operator on $\mathcal{L}^2(L)$. Therefore, for any $\Psi \in \mathcal{D}$,
                \begin{equation}\label{eq:magneticLaplacian-lowerBound-1}
                    \norm{\D[2]{\Psi}{x}} = \norm{\Dx[2] (D_A^2 + iI)^{-1} (D_A^2 + iI) \Psi}
                    \leq K_A \norm{(D_A^2 + iI) \Psi} \leq K_A \left(\norm{D_A^2 \Psi} + \norm{\Psi}\right)
                \end{equation}
                where we have defined $K_A = \norm{\Dx[2] (D_A^2 + iI)^{-1}}$.

                Once we have proved Equation \eqref{eq:magneticLaplacian-lowerBound-1}, we can prove that for every constant vector potential $A$ there exists an $\varepsilon_A > 0$ such that for any constant magnetic vector potential $B$ satisfying $|A  - B | \leq \varepsilon_A$ it holds
                \[
                    \norm{\D[2]{\Psi}{x}} \leq \tilde{K}_A \left(\norm{D_B^2 \Psi}+\norm{\Psi}\right),
                \]
                with $\tilde{K}_A > 0$ not depending on $B$. Indeed, from Equation \eqref{eq:magneticLaplacian-lowerBound-1} we have
                \begin{equation}\label{eq:magneticLaplacian-lowerBound-2}
                    \norm{\D[2]{\Psi}{x}} \leq K_A \left(\norm{D_A^2 \Psi} + \norm{\Psi}\right)
                    \leq K_A \left(\norm{D_A^2 \Psi - D_B^2 \Psi} + \norm{D_B^2\Psi} + \norm{\Psi} \right).
                \end{equation}
                Let us examine the first term in the parenthesis. By the definition of the Magnetic Laplacians,
                \begin{equation}\label{eq:magneticLaplacian-lowerBound-3}
                        \norm{D_A^2 \Psi - D_B^2 \Psi}
                        = \norm{(A ^2 - B ^2) \Psi  + 2i(A  - B )\D{\Psi }{x }}.
                \end{equation}
                Denoting $\varepsilon  = |A  - B |$ and using the triangle inequality one gets
                \begin{equation}
                    \norm{(A ^2 - B ^2) \Psi  + 2i(A  - B )\D{\Psi }{x }}
                    \leq \varepsilon (2|A | + \varepsilon ) \norm{\Psi } + 2 \varepsilon  \norm{\D{\Psi }{x }}.
                \end{equation}

                Now, using the well-known fact \cite[Thm. 5.2]{adams_sobolev_2003} that
                \[
                    \norm{\D{\Psi }{x }} \leq \tilde{K} \left(\norm{\D[2]{\Psi }{x }} + \norm{\Psi }\right),
                \]
                we get
                \begin{equation}
                    \norm{(A ^2 - B ^2) \Psi  + 2i(A  - B )\D{\Psi }{x }} \leq \varepsilon(2|A | + \varepsilon + 2 \tilde{K}) \left( \norm{\Psi} + \norm{\D[2]{\Psi}{x}}\right).
                \end{equation}

                Let us define the function $\kappa(\varepsilon) := \varepsilon(2 |A | + \varepsilon + 2 \tilde{K})$, which is a continuous monotone function of $\varepsilon$ with range $[0, \infty)$. Substituting back into into Equation~\ref{eq:magneticLaplacian-lowerBound-3} we get
                \[
                    \norm{D_A^2 \Psi - D_B^2 \Psi} \leq \kappa(\varepsilon) \left(
                        \norm{\Psi} + \norm{\D[2]{\Psi}{x}}
                    \right).
                \]

                Hence, from Equation \eqref{eq:magneticLaplacian-lowerBound-2} we have that
                \[
                    \left(1 - \kappa(\varepsilon) K_A\right) \norm{\D[2]{\Psi}{x}}
                    \leq K_A\left(1 + \kappa(\varepsilon) \right) \left(\norm{D_B^2\Psi} + \norm{\Psi}\right).
                \]
                Obviously we can choose $\varepsilon_A$ such that $\kappa(\varepsilon_A) K_A  = 1/2$, and then
                \[
                    \norm{\D[2]{\Psi}{x}} \leq 2K_A\left(1 + \kappa(\varepsilon_A) \right)
                    \left(\norm{D_B^2\Psi} + \norm{\Psi}\right)
                    \eqqcolon \tilde{K}_A \left(\norm{D_B^2\Psi} + \norm{\Psi}\right).
                \]

                The proof can be finished by a compacity argument. Since we are only considering constant vector potentials, each potential $A$ defines a point in $\mathbb{R}$. The subset $\mathcal{K}$ of $\mathbb{R}$ associated to the set of vector potentials satisfying $| A | \leq r$ is a compact subset. Now, define $U_A = \{B \in \mathbb{R} \mid  |B  - A | < \varepsilon_A\}$; the family $\{U_A\}_{A \in U}$ forms a covering of $\mathcal{K}$ and by compacity it admits a finite subcovering $\{U_{A_i}\}_i$. Taking the maximum of the associated constants,
                \[
                    K = \max_i \tilde{K}_{A_i},
                \]
                concludes the proof.
            \end{proof}

            \begin{theorem}\label{thm:controllability-piecewise-smooth}

            Let $\mathcal{C}_p(0,T)$ the set of piecewise two times continuously differentiable functions on the interval $[0,T]$. The boundary control system with controls $\mathcal{C}_p(0,T)$ is approximately controllable.
            \end{theorem}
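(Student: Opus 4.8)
The plan is to reduce the problem, via Proposition~\ref{prop:equivmagneticLaplacian}, to the approximate controllability of the magnetic control system with Hamiltonian $H(t) = -\left[\left(\Dx - iA(t)\right)^2 + A'(t)x\right]$ and periodic boundary conditions, and then to bring this into the framework of Theorem~\ref{thm:chambrion-controllability} using Lemma~\ref{lemma:chambrion-thm-applies}. The first step is to expand $H(t) = -\Dx[2] + 2iA(t)\Dx + A(t)^2 + A'(t)x$ and recognise it as a linear combination $\sum_i f_i(t) H_i$ with $H_1 = -\Dx[2]$, $H_2 = i\Dx$, $H_3 = \mathbb{I}$, $H_4 = x$ (multiplication), all symmetric on the domain $\DD$ of periodic boundary conditions, with coefficient functions $1, 2A(t), A(t)^2, A'(t)$, each in $C^1$ when $A \in \mathcal{C}_p(0,T)$ (piecewise; the propagator is then built by concatenation over the subintervals where $A$ is $C^2$). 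One checks $H(t)$ is self-adjoint for each $t$ (it is a bounded symmetric perturbation of the periodic Laplacian), and hypothesis \ref{enum:H3-thm-linCombHamiltonian} of Theorem~\ref{thm:timedependent-linearCombination-Hamiltonian} follows from Lemma~\ref{lemma:magnetic-Laplacian-satisfy-thm5.3} together with the Sobolev interpolation bound $\norm{\Dx\Psi} \leq \tilde K(\norm{\Dx[2]\Psi} + \norm{\Psi})$ and boundedness of multiplication by $x$ on the compact interval $L$; hence the dynamics exists by Theorem~\ref{thm:timedependent-linearCombination-Hamiltonian}.

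Next I would set up the comparison with a time-\emph{independent} drift. The issue is that $H(t)$ as written is genuinely bilinear only if we can isolate a single scalar control entering linearly; here the control $A(t)$ enters through $A(t)$, $A(t)^2$ and $A'(t)$ simultaneously, so it is not literally of the form $H_0 + u(t)H_1$. The resolution I would adopt is the one suggested by the structure of Section~\ref{sec:approximatecontrol}: fix a reference value, say expand around $A \equiv 0$, so the natural drift is $H_0 = -\Dx[2]$ (the periodic Laplacian) with eigenfunctions $\phi_n(x) = l^{-1/2}e^{2\pi i n x/l}$, $n \in \mathbb{Z}$, and eigenvalues $\lambda_n = (2\pi n/l)^2$; and take as the effective single control interaction the operator $H_1 = i\Dx$ (equivalently $x$), treating the remaining terms ($A^2$, and the $A'(t)x$ piece) as a small correction controlled by taking $\norm{A}_\infty$ and $\norm{A'}_\infty$ small, invoking Theorem~\ref{thm:aprox-Hamiltonians-aprox-sol} to show the true propagator is $\varepsilon$-close to that of the bilinear model $-\Dx[2] + u(t)\cdot i\Dx$. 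Then apply Lemma~\ref{lemma:chambrion-thm-applies} to perturb $H_0$ and $H_1$ slightly so that the spectral gaps $\lambda_{n+1} - \lambda_n$ become $\Q$-linearly independent and the coupling matrix elements $\scalar{\phi_{n+1}}{H_1\phi_n}$ are all nonzero; for $H_1 = i\Dx$ one has $\scalar{\phi_{n+1}}{i\Dx\phi_n} = -(2\pi n/l)\delta$-type expressions, which vanish for $n=0$, so the perturbation $H_{1,p}$ of the lemma is genuinely needed there. Theorem~\ref{thm:chambrion-controllability} then gives approximate controllability of the perturbed bilinear system, and a triangle-inequality chaining of the three approximations (perturbation of $H_0,H_1$; closeness of bilinear model to true $H(t)$; and the $\varepsilon$-reachability of the target for the perturbed system) yields approximate controllability of the original boundary control system.

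One subtlety to handle carefully is the \emph{positivity/range constraint} on the control in Theorem~\ref{thm:chambrion-controllability}, where $u : \R \to (0,c)$, versus $\mathcal{C}_p(0,T)$ allowing arbitrary sign; since $i\Dx$ and the constant operator $\mathbb{I}$ are both available (a constant shift of $H_1$ moves the control window without changing the projective dynamics up to a global phase), one can absorb an affine reparametrisation $u \mapsto u + c/2$ to land in $(0,c)$, so this is not a real obstruction. The main obstacle I expect is precisely the passage from the multi-term dependence on $A(t)$ to a bona fide bilinear system: one must argue that discarding/shrinking the $A^2$ and $A'x$ contributions still leaves enough control authority, i.e. that the approximation in Theorem~\ref{thm:aprox-Hamiltonians-aprox-sol} can be made uniform over the (long) control times needed by Chambrion's theorem — the constant $\delta$ there degrades like $1/(T-s)$, so one applies the approximation on each piece of the piecewise-$C^2$ control and tracks the accumulated error against the number of switching times, which is fixed once the target accuracy $\varepsilon$ and the control realising $\varepsilon$-reachability for the bilinear model are chosen. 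Making this order of quantifiers precise — first fix $\Psi_T$ and $\varepsilon$, then get the bilinear control and its time horizon from Chambrion, then choose the perturbation sizes $\mu_0,\mu_1$ and the smallness of the neglected terms — is the technical heart of the argument.
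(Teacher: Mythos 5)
Your overall architecture (reduction via Proposition~\ref{prop:equivmagneticLaplacian}, existence of dynamics through Lemma~\ref{lemma:magnetic-Laplacian-satisfy-thm5.3} and Theorem~\ref{thm:timedependent-linearCombination-Hamiltonian}, then Chambrion's theorem via Lemma~\ref{lemma:chambrion-thm-applies} combined with Theorem~\ref{thm:aprox-Hamiltonians-aprox-sol}, with the quantifier ordering you describe at the end) matches the paper. But the step you yourself flag as the technical heart --- the reduction to a bona fide bilinear system --- is done with the wrong decomposition, and it fails for two concrete reasons. First, with periodic boundary conditions the operator $i\Dx$ is diagonal in the eigenbasis $\phi_n(x)=l^{-1/2}e^{2\pi i n x/l}$ of the drift $-\Dx[2]$, so \emph{every} element $\scalar{\phi_{n+1}}{i\Dx\,\phi_n}$ vanishes, not only the one at $n=0$: the bilinear model $-\Dx[2]+u(t)\,i\Dx$ preserves each eigenspace and is not approximately controllable at all. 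After perturbing as in Lemma~\ref{lemma:chambrion-thm-applies}, the only coupling left is the artificial $\mu_1 H_{1,p}$, whose dynamical effect is precisely what that lemma forces to be $\varepsilon$-small, so no controllability of the original system can be extracted this way. Second, the term $A'(t)x$ cannot be discarded by ``taking $\norm{A'}_\infty$ small'': the error it contributes in Theorem~\ref{thm:aprox-Hamiltonians-aprox-sol} over a horizon $[0,T]$ is of order $\int_0^T\abs{A'}\,\d t$, the total variation of $A$, which is necessarily of order (number of switchings) times the control amplitude whenever $A$ itself is used as an active control sweeping an interval. You cannot simultaneously let $A$ play the role of the Chambrion control and make $A'$ negligible.

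The paper's proof makes the opposite split and adds one device you are missing. It takes the Chambrion control to be $u=A'$ coupled to $H_1=x$, for which $\scalar{\phi_m}{x\,\phi_n}=\tfrac{l}{2\pi i(m-n)}\neq 0$ for all $m\neq n$, and takes as drift the full magnetic Laplacian $-D^2=-\left(\Dx-ia\right)^2$ at a fixed constant reference $a$ (which for generic $a$ also lifts the degeneracy $\lambda_n=\lambda_{-n}$ of the free periodic Laplacian). The remaining error is then governed by $\norm{A-a}_\infty$; since $A(t)=A(0)+\int_0^t u$ drifts away from $a$ over a long horizon, the paper resets $A$ to the value $a$ at the beginning of each of $N$ subintervals of length $\tau=T/N$, so that $\norm{A-a}_\infty\leq c\tau$ can be made arbitrarily small without constraining $u=A'$, at the price of making $A$ only piecewise $C^2$. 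This resetting is exactly why the theorem is stated for $\mathcal{C}_p(0,T)$ rather than for continuous controls, and it is the key idea absent from your proposal. (Your remark on shifting the control range into $(0,c)$ is correct but peripheral; the genuine gap is the choice of control operator and the treatment of the $A'x$ term.)
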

            \begin{proof}
                By Proposition~\ref{prop:equivmagneticLaplacian} the boundary control system is controllable if and only if so is the magnetic controlled system given by
                \begin{equation}\label{eq:original-hamiltonian}
                    H(t) = - \left[\left(\Dx - iA(t)\right)^2 +  A'(t) x \right]
                \end{equation}
                and periodic boundary conditions. We will proof that this equivalent system is approximately controllable using Theorem~\ref{thm:chambrion-controllability}. The main problem to do this is the fact that $A(t)$ and $A'(t)$ are not independent, and to avoid this problem we need to proceed in two steps. First we define an auxiliary system to which Theorem~\ref{thm:chambrion-controllability} applies and then we use Theorem~\ref{thm:aprox-Hamiltonians-aprox-sol} to show that for any controls on the auxiliary system, its evolution is approximately the same as the evolution of the original system with some controls related to those on the auxiliary system.

                Let us start with the first step. Take a constant magnetic vector potential $a \in  \H^1(L)$ with associated magnetic Laplacian $-D^2 = -\left(\Dx - ia\right)^2$. We consider
                \begin{equation}\label{eq:A-prime}
                    A'(t) = u(t),
                \end{equation}
                where $u: I \to \R$ is a control, and define the auxiliary system with Hamiltonian
                \begin{equation}\label{eq:auxiliary-hamiltonian}
                    \tilde{H}(t) = -D^2 - u(t)  x,
                \end{equation}

                It is easy to check that the assumptions made by Chambrion et al.\ are satisfied in our case: $H_0 = -D^2$ and $H_1 =  x$ are self-adjoint operators not depending on $t$, there exists an orthonormal basis of the Hilbert space $\H$ made of eigenfunctions of any magnetic Laplacian over $L$ provided that $L$ is compact \cite[Thm. 3.1.1]{berkolaiko2013introduction}, and $H_1$ is a self-adjoint bounded operator (since $L$ is compact) and thus $\dom H_1 = \H$.

                By Lemma~\ref{lemma:chambrion-thm-applies}, Theorem~\ref{thm:chambrion-controllability} can be applied (either to $\tilde{H}(t)$ or to a perturbed system with evolution as \emph{closed} as desired) and so the system is approximately controllable. Hence, for every initial state $\Psi_0$, every target state $\Psi_T$, every $\varepsilon > 0$ and every $c > 0$ there exists $T>0$ and $u(t): [0, T] \to (0, c)$ piecewise constant such that the evolution induced by $\tilde{H}(t)$ and denoted $\tilde{\Psi}(t)$, satisfies $\tilde{\Psi}(0) = \Psi_0$ and $\norm{\tilde{\Psi}(T) - \Psi_T} < \varepsilon / 2$. Denote by $\tilde{U}(t,s)$ the unitary propagator associated to $\tilde{H}$ with controls $u(t)$.

                Now, choosing the vector potential from the original system \eqref{eq:original-hamiltonian} in such a way that its induced evolution is close enough to that of the auxiliary system, one guarantees that the evolved state reaches near the target state at time $T$. In order to do that, we split the time interval $[0, T]$ into $N$ pieces of length $\tau = T/N$, and for each of those subintervals define $A_k: [k\tau, (k+1)\tau) \to \mathbb{R}$ as
                \[
                    A_k(t) = a + \int_{k\tau}^t u(s) \, ds,
                \]
                with $u(t)$ the piecewise control given by Chambrion et al.'s theorem. Taking
                \[
                    A(t) = \sum_{k = 0}^{N - 1} \chi_{[k \tau, (k+1) \tau)}(t) A_k(t),
                \]
                it is clear that $A'(t) = u(t)$ and $A \in \mathcal{C}_p(0,T)$. Also, by the mean value theorem,
                \begin{equation}\label{eq:controllability-3}
                    \norm{A - a}_\infty =
                    \adjustlimits\max_{k < N} \sup_{k\tau \leq t < (k+1)\tau}
                    \int_{k\tau}^t u(s) \,ds \leq c \tau
                \end{equation}
                For the moment, $\tau$ is arbitrary but later on we will need to choose it small enough.

                Expanding the square on \eqref{eq:original-hamiltonian} and having into account Lemma~\ref{lemma:magnetic-Laplacian-satisfy-thm5.3}, it is easy to check that the Hamiltonian of the original system, $H(t)$, fulfills the hypothesis of Theorem~\ref{thm:timedependent-linearCombination-Hamiltonian} in every interval $[k\tau, (k+1)\tau)$.

                Hence, there exists a unitary propagator $U_k(t,s)$ describing the evolution induced by it for $t, s \in [k\tau, (k+1)\tau)$. For $t \in [k\tau, (k+1)\tau]$, $s \in [\ell \tau,
                (\ell + 1)\tau)$ with $\ell < k$ the unitary propagator is constructed multiplying them:
                \[
                    U(t, s) = U_k(t, k\tau) U_{k-1}(k\tau, (k-1)\tau) \cdots U_\ell((\ell + 1)\tau, s).
                \]
                In what follows we omit the subscript on $U_k$ since the values of its arguments $t,s$ identify the index $k$ unambiguously.

                Finally, let $\{I_j\}$ with $I_j = [t_j, t_{j -1})$ be the coarser partition of $[0, T]$ which is a common refinement of both the partition $\{[k\tau, (k+1)\tau)\}_{k}$ and that given by the piecewise definition of $u(t)$. Remember that Theorem~\ref{thm:chambrion-controllability} proves approximate controllability for $u(t)$ piecewise-constant control functions. It is clear that the state of the system at time $T \in I_n$, assumed the evolution induced by $H(t)$ (defined in Equation \eqref{eq:controllability-4}) starting at $\Psi_0$, can be written as
                \[
                    \Psi(T) = U(T, t_n) U(t_n, t_{n-1}) \cdots U(t_1, 0) \Psi_0.
                \]
                And similarly for the state $\tilde{\Psi}(T)$ if we assume evolution by $\tilde{H}$ defined in Equation \eqref{eq:auxiliary-hamiltonian} (using the unitary propagator $\tilde{U}$ instead of $U$).

                It is straightforward to check that in every $I_j$ both Hamiltonians satisfy the hypothesis of Theorem~\ref{thm:aprox-Hamiltonians-aprox-sol}: the domain of magnetic Laplacians is fixed by periodic boundary conditions independent of $t$, and the multiplication operator $x$ is bounded. Remember that $u(t)$ being $C^1(I_j)$, in fact constant on $I_j$, implies that the functions giving the time dependence of the Hamiltonians (after expanding the magnetic Laplacians) are also $C^1(I_j)$. Both Hamiltonians satisfy the hypothesis of Theorem~\ref{thm:timedependent-linearCombination-Hamiltonian} (see Lemma~\ref{lemma:magnetic-Laplacian-satisfy-thm5.3}). Hence, for any $t, s \in I_j,$ and any $\varepsilon_2 > 0$ we can chose $\delta = c \tau$ as in Theorem~\ref{thm:aprox-Hamiltonians-aprox-sol} so that
                \[ \norm{\tilde{U}(t, s) \Psi_0 - U(t, s) \Psi_0} < \varepsilon_2. \]
                Hence, we have
                \[ \begin{alignedat}{2}
                    \norm{\tilde{\Psi}(T) - \Psi(T)}
                    &= \norm{\tilde{U}(T, t_n) \cdots \tilde{U}(t_1, s) \Psi_0 -
                             U(T, t_n) \cdots U(t_1, s) \Psi_0} \\
                    &\leq \norm{\tilde{U}(T, t_n) \cdots \tilde{U}(t_2, t_1)U(t_1, s) \Psi_0 - U(T, t_n) \cdots U(t_1, s) \Psi_0} + \varepsilon_2 \\
                    &\vdots \\
                    &\leq (n+1) \varepsilon_2.
                \end{alignedat} \]
                Taking $\varepsilon_2 = \varepsilon / (2n+2)$, we have
                \[ \norm{\tilde{\Psi}(T) - \Psi(T)} \leq \frac{\varepsilon}{2}. \]

                Using that for the auxiliary system we have that $\norm{\tilde{\Psi}(T) - \Psi_T} < \frac{\varepsilon}{2}$, we conclude
                \[ \norm{\Psi(T) - \Psi_T} < \varepsilon. \]
                Hence, we have found a control $A:[0,T]\to \R$ piecewise two times continuously differentiable such that from any $\Psi_0$ we can reach as close as we want to any $\Psi_T$ and so the system is approximately controllable.
            \end{proof}
            Using Theorem~\ref{thm:controllability-piecewise-smooth} and an approximating argument similar to that in its proof, is easy to show that controls can also be smooth functions of time.
            \begin{corollary}\label{corol:controllability-smooth}
                Every boundary control system with smooth controls $A:[0,T] \to \R$ is approximately controllable.
            \end{corollary}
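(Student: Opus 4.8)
The plan is to bootstrap from Theorem~\ref{thm:controllability-piecewise-smooth} by approximating a piecewise $C^2$ control by a genuinely smooth one, mirroring the composition argument used in that theorem's proof. Fix an initial state $\Psi_0$, a target $\Psi_T$ and $\varepsilon>0$. First I would invoke Theorem~\ref{thm:controllability-piecewise-smooth} to obtain a time $T>0$ and a control $A_{pw}\in\mathcal{C}_p(0,T)$ whose induced evolution — equivalently, via Proposition~\ref{prop:equivmagneticLaplacian}, that of the magnetic Hamiltonian $H_{pw}(t)=-\left[\left(\Dx-iA_{pw}(t)\right)^2+A_{pw}'(t)x\right]$ with periodic boundary conditions — carries $\Psi_0$ to within $\varepsilon/2$ of $\Psi_T$. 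By construction $A_{pw}$ is $C^2$ on each of finitely many subintervals of a partition $0=s_0<s_1<\dots<s_M=T$, with possible discontinuities of $A_{pw}$ and of $A_{pw}'$ only at the breakpoints $s_j$.

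Second, I would build a control $A_{sm}\in C^\infty([0,T])$ that coincides with $A_{pw}$ outside small neighbourhoods $(s_j-\eta,s_j+\eta)$ of the breakpoints and interpolates smoothly across each such neighbourhood, matching $A_{pw}$ together with its first two derivatives at the endpoints $s_j\pm\eta$ (possible because $A_{pw}$ is $C^2$ there). Refining the partition so that each $(s_j-\eta,s_j+\eta)$ is one of its elements, on every interval of the refined partition both $A_{sm}$ and $A_{pw}$ — hence the coefficient functions $1$, $A(t)$, $A(t)^2$, $A'(t)$ obtained after expanding the magnetic Laplacian — are $C^1$, both Hamiltonians $H_{sm}(t)$, $H_{pw}(t)$ carry the fixed periodic-boundary-condition domain, and by Lemma~\ref{lemma:magnetic-Laplacian-satisfy-thm5.3} they satisfy the hypotheses of Theorem~\ref{thm:timedependent-linearCombination-Hamiltonian} there. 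On the refined intervals lying away from the breakpoints the two controls agree, so the corresponding propagators agree; on the finitely many smoothing intervals I would invoke Theorem~\ref{thm:aprox-Hamiltonians-aprox-sol}, and then compose the per-interval estimates across the refined partition exactly as in the proof of Theorem~\ref{thm:controllability-piecewise-smooth} to obtain $\norm{U_{sm}(T,0)\Psi_0-U_{pw}(T,0)\Psi_0}<\varepsilon/2$. The triangle inequality then gives $\norm{U_{sm}(T,0)\Psi_0-\Psi_T}<\varepsilon$, which is the claim, since $A_{sm}$ is an admissible smooth control for the boundary control system.

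The step I expect to be the real obstacle is controlling the error introduced on the smoothing intervals. Since the magnetic Hamiltonian involves $A'(t)$ and not merely $A(t)$, a naive mollification is not harmless: absorbing a jump of $A_{pw}$ of size $\Delta_j$ over an interval of width $2\eta$ forces $A_{sm}'$ to be of order $\Delta_j/\eta$, so one must verify that the product of the interval length with this large coefficient discrepancy, summed over all smoothing intervals, can still be made arbitrarily small. This requires keeping explicit track of the sizes of the discontinuities of the particular $A_{pw}$ produced by Theorem~\ref{thm:controllability-piecewise-smooth} — namely the reset increments $\int_{k\tau}^{(k+1)\tau}u(s)\,ds$ — and choosing $\eta$, and if necessary the parameters $c$ and $N$ entering that construction, so that the accumulated smoothing error remains below $\varepsilon/2$; equivalently one reorganises the interpolation so that the controlling quantity is the total variation that $A_{sm}$ must absorb near the breakpoints rather than the supremum of $A_{sm}'$. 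Once this bookkeeping is carried out, the composition of the pieces and the triangle inequality close the argument as above, and the same reasoning upgrades Corollary~\ref{corol:controllability-smooth} with no further ingredients.
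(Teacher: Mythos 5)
Your route is the same as the paper's: invoke Theorem~\ref{thm:controllability-piecewise-smooth} to get a piecewise-$C^2$ control $\tilde A$ landing within $\varepsilon/2$ of the target, replace it by a smooth control close to it, and compare the two evolutions interval by interval with Theorem~\ref{thm:aprox-Hamiltonians-aprox-sol}, composing the per-interval estimates exactly as in the proof of Theorem~\ref{thm:controllability-piecewise-smooth}. The paper compresses the second step into a citation of a standard approximation result giving $\norm{A-\tilde A}<\delta_1$ and $\norm{A'-\tilde A'}<\delta_2$; you make the smoothing construction explicit. So far the two arguments coincide.

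The obstacle you flag at the end, however, is genuine, and your proposal does not close it. The control produced by Theorem~\ref{thm:controllability-piecewise-smooth} is the sawtooth $A_k(t)=a+\int_{k\tau}^t u(s)\,\d s$, which is \emph{discontinuous} at the breakpoints with downward jumps $\Delta_k=\int_{k\tau}^{(k+1)\tau}u(s)\,\d s$. Any smooth $A_{sm}$ absorbing the jump at $s_k$ over a window of width $2\eta$ satisfies $\int_{s_k-\eta}^{s_k+\eta}\abs{A_{sm}'-u}\,\d t\geq\Delta_k$, so the error contributed by that window in the estimate of Theorem~\ref{thm:aprox-Hamiltonians-aprox-sol} is bounded below (not just above) by roughly $\Delta_k\,\norm{x\Psi}$ independently of $\eta$; summing over the breakpoints gives $\sum_k\Delta_k=\int_0^T u(s)\,\d s$, a quantity fixed by the Chambrion control and not at your disposal (shrinking $c$ lengthens $T$ correspondingly, and the resets to $a$ cannot be dropped because they are what keeps $\norm{A-a}_\infty\leq c\tau$ in the proof of Theorem~\ref{thm:controllability-piecewise-smooth}; a continuous primitive of $u>0$ drifts away from $a$ by up to $cT$). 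So "reorganising the interpolation" within the stated estimates cannot make the accumulated smoothing error small: one would need an argument beyond Theorem~\ref{thm:aprox-Hamiltonians-aprox-sol}, e.g.\ recognising that a large $A'$ concentrated on a short interval acts approximately as the gauge kick $e^{i\Delta_k x}$ and relating that to the gluing implicit in the piecewise propagator. To be fair, the paper's own proof has the identical unaddressed difficulty -- it asserts a simultaneous uniform approximation of $\tilde A$ and $\tilde A'$ by a smooth function, which is impossible across the jumps of $\tilde A$ -- but since you explicitly stake the proof on carrying out this bookkeeping, you should be aware that, as formulated, it does not go through.
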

            \begin{proof}
                By Proposition~\ref{prop:equivmagneticLaplacian} the boundary control system is approximately controllable if and only if so is the magnetic controlled system with
                \[ H(t) = - \left[\left(\Dx - iA(t)\right)^2 + A'(t) x \right] \]
                and periodic boundary conditions.

                From Theorem~\ref{thm:controllability-piecewise-smooth}, for every initial state $\Psi_0$, every target state $\Psi_T$ and every $\varepsilon > 0$, we have piecewise two times continuously differentiable controls $\tilde{A}(t)$ such that the evolution $\tilde{\Psi}(t)$ induced by
                \[ \tilde{H}(t) = - \left[\left(\Dx - i\tilde{A}(t)\right)^2  + \tilde{A}'(t) x \right], \]
                satisfies $\tilde{\Psi}(0) = \Psi_0$ and $\norm{\smash{\tilde{\Psi}(T) - \Psi_T}} < \varepsilon / 2$. Denote by $\tilde{U}(t,s)$ the unitary propagator associated to $\tilde{H}(t)$.

                Using some well-known approximation result (see, for example, \cite[\S5.3]{evans_partial_1998}) one can find $A(t)$ smooth such that $\norm{A - \tilde{A}} < \delta_1$ and $\norm{A' - \tilde{A}'} < \delta_2$. Taking $\{I_j\}_j$ the partition of $[0, T]$ given by the subintervals on which $\tilde{A}(t)$ is $C^2$ and using the same argument as in the proof of Theorem~\ref{thm:controllability-piecewise-smooth}, one can use Theorem~\ref{thm:aprox-Hamiltonians-aprox-sol} to show that the evolution induced by $H(t)$ satisfies
                \[ \norm{\Psi(T) - \Psi_T} < \varepsilon.\]
            \end{proof}

%%%%%%%%%%%%%%%%%%%%%%%%%%%%%%%%%%%%%%%%%%%%%%%%%%%%%%%%%%%%%%%%%%%%%%%%%%%%%%%%%%%%%%%%%%%%%%%%%%%%%%%%%%%%%%%%%%%%%%%%%%%%%%%%%%%%%%%%%%%%%%%%%%%%%%%%%%%%%%%%%%%%%%%%%%%%%%%%%%%%%%%%%%%%%%%%%%%%%%%%%%%%%%%%%%%%%%%%%%%%%%%%%%%%%%%%%%%%%%%%%%%

\section{Conclusions}

We proposed a scheme for quantum control at the boundary and rigorously proved its controllability. It is the first time that the controllability of such a quantum system has been considered. This shows that the scheme of quantum control at the boundary is feasible. Moreover, the particular system considered presents the advantage that it could be experimentally implemented. Indeed, this quantum system represents a quantum particle moving in a spire controlled by the flux of a magnetic field that traverses the plane of the spire.

\subsubsection*{Acknowledgments.}
J.M.P.P. is supported by QUITEMAD+, S2013/ICE-2801 and the “Juan de la Cierva - Incorporación” Proyect 2018/00002/001 and is partially supported by Spanish Ministry of Economy and Competitiveness through project DGI \hbox{MTM2017-84098-P.}

A.B. is supported by QUITEMAD+, S2013/ICE-2801 and the UC3M University through Ph.D. program grant PIPF UC3M 01-1819 and is partially supported by Spanish Ministry of Economy and Competitiveness through project DGI MTM2017-84098-P.

%%%%%%%%%%%%%%%%%%%%%%%%%%%%%%%%%%%%%%%%%%%%%%%%%%%%%%%%%%%%%%%%%%%%%%%%%%%%%%%%%%%%%%%%%%%%%%%%%%%%%%%%%%%%%%%%%%%%%%%%%%%%%%%%%%%%%%%%%%%%%%%%%%%%%%%%%%%%%%%%%%%%%%%%%%%%%%%%%%%%%%%%%%%%%%%%%%%%%%%%%%%%%%%%%%%%%%%%%%%%%%%%%%%%%%%%%%%%%%%%%%%


\begin{thebibliography}{49}

{\small


\bibitem{adams_sobolev_2003}
{\sc Adams, R., and Fournier, J.}
\newblock {\em Sobolev {Spaces}}.
\newblock Pure and {Applied} {Mathematics}. Elsevier Science, 2003.

\bibitem{agarwal_quantum_2012}
{\sc Agarwal, G.~S.}
\newblock {\em Quantum {Optics}}, 1 edition~ed.
\newblock Cambridge University Press, Nov. 2012.

\bibitem{asorey2005global}
{\sc Asorey, M., Ibort, A., and Marmo, G.}
\newblock Global theory of quantum boundary conditions and topology change.
\newblock {\em International Journal of Modern Physics A 20}, 05 (2005),
  1001--1025.

%\bibitem{asorey2015topology}
%{\sc Asorey, M., Ibort, A., and Marmo, G.}
%\newblock The topology and geometry of self-adjoint and elliptic boundary
%  conditions for dirac and laplace operators.
%\newblock {\em International Journal of Geometric Methods in Modern Physics
%  12}, 06 (2015), 1561007.

\bibitem{Bacciagaluppi2016}
{\sc Bacciagaluppi, G.}
\newblock {\em The Role of Decoherence in Quantum Mechanics.}
\newblock  The Stanford Encyclopedia of Philosophy (Fall 2016 Edition), Edward N. Zalta (ed.)

\bibitem{beauchard_local_2005}
{\sc Beauchard, K.}
\newblock Local controllability of a 1-{D} {Schrödinger} equation.
\newblock {\em Journal de Mathématiques Pures et Appliquées 84}, 7 (July
  2005), 851--956.

\bibitem{beauchard_controllability_2006}
{\sc Beauchard, K., and Coron, J.-M.}
\newblock Controllability of a quantum particle in a moving potential well.
\newblock {\em Journal of Functional Analysis 232}, 2 (Mar. 2006), 328--389.

\bibitem{berkolaiko2013introduction}
{\sc Berkolaiko, G., and Kuchment, P.}
\newblock {\em Introduction to quantum graphs}.
\newblock No.~186. American Mathematical Soc., 2013.

\bibitem{bonnard_review_2012}
{\sc Bonnard, B., Glaser, S.~J., and Sugny, D.}
\newblock A {Review} of {Geometric} {Optimal} {Control} for {Quantum} {Systems}
  in {Nuclear} {Magnetic} {Resonance}, 2012.

%\bibitem{bruning_spectra_2008}
%{\sc Brüning, J., Geyler, V., and Pankrashkin, K.}
%\newblock Spectra of self-adjoint extensions and applications to solvable
%  schrödinger operators.
%\newblock {\em Reviews in Mathematical Physics 20}, 01 (Feb. 2008), 1--70.

\bibitem{CIMM15}
{\sc Cariñena, J.F., Ibort, A. and Marmo, G.}
\newblock {\em Geometry from Dynamics, Classical and Quantum.}
\newblock {Springer, Netherlands, 2015}.

\bibitem{carlini_time-optimal_2014}
{\sc Carlini, A., Mari, A., and Giovannetti, V.}
\newblock Time-optimal thermalization of single-mode {Gaussian} states.
\newblock {\em Physical Review A 90}, 5 (Nov. 2014), 052324.

\bibitem{chambrion_controllability_2009}
{\sc Chambrion, T., Mason, P., Sigalotti, M., and Boscain, U.}
\newblock Controllability of the discrete-spectrum {Schrödinger} equation
  driven by an external field.
\newblock {\em Annales de l'Institut Henri Poincare (C) Non Linear Analysis
  26}, 1 (Jan. 2009), 329--349.

%\bibitem{childs_universal_2009}
%{\sc Childs, A.~M.}
%\newblock Universal {Computation} by {Quantum} {Walk}.
%\newblock {\em Physical Review Letters 102}, 18 (May 2009), 180501.

\bibitem{dalessandro_introduction_2007}
{\sc D'Alessandro, D.}
\newblock {\em Introduction to {Quantum} {Control} and {Dynamics}}.
\newblock CRC Press, Aug. 2007.

\bibitem{delgado-tellez_optimal_2016}
{\sc Delgado-Téllez, M., Ibort, A., Peña, T. R. d.~l., and Salmoni, R.}
\newblock Optimal control of two coupled spinning particles in the
  {Euler}–{Lagrange} picture.
\newblock {\em Journal of Physics A: Mathematical and Theoretical 49}, 1
  (2016), 015206.

\bibitem{evans_partial_1998}
{\sc Evans, L.~C.}
\newblock {\em Partial {Differential} {Equations}}.
\newblock American Mathematical Society, 1998.

\bibitem{EMM10}
{\sc Ercolessi, E., Marmo, G., and Morandi, G.}
\newblock {From the Equations of Motion to the Canonical Commutation Relations}
\newblock {\em Nuovo Cimento Soc. Ital. Fis. 33} (2010)

\bibitem{facchi_quantum_2018}
{\sc Facchi, P., Garnero, G., and Ligabò, M.}
\newblock Quantum cavities with alternating boundary conditions.
\newblock {\em Journal of Physics A: Mathematical and Theoretical 51}, 10
  (2018), 105301.

\bibitem{facchi_self-adjoint_2018}
{\sc Facchi, P., Garnero, G., and Ligabò, M.}
\newblock Self-adjoint extensions and unitary operators on the boundary.
\newblock {\em Letters in Mathematical Physics 108}, 1 (Jan. 2018), 195--212.

\bibitem{facchi_moving_2016}
{\sc Facchi, P., Garnero, G., Marmo, G., and Samuel, J.}
\newblock Moving walls and geometric phases.
\newblock {\em Annals of Physics 372\/} (Sept. 2016), 201--214.

\bibitem{facchi_boundaries_2018}
{\sc Facchi, P., Garnero, G., Marmo, G., Samuel, J., and Sinha, S.}
\newblock Boundaries without boundaries.
\newblock {\em Annals of Physics 394\/} (July 2018), 139--154.

\bibitem{Garg99}
{\sc Garg, A.}
\newblock Vibrational Decoherence in Ion Trap Quantum Computers
\newblock {Fortschritte der Physik 46} (November 1998), 749--757.

\bibitem{garnero_quantum_2018}
{\sc Garnero, G.}
\newblock {\em Quantum {Boundary} {Conditions}}.
\newblock PhD thesis, Univ. Di Bari, 2018.

\bibitem{genoni_optimal_2013}
{\sc Genoni, M.~G., Mancini, S., and Serafini, A.}
\newblock Optimal feedback control of linear quantum systems in the presence of
  thermal noise.
\newblock {\em Physical Review A 87}, 4 (Apr. 2013), 042333.

\bibitem{glaser_training_2015}
{\sc Glaser, S.~J. et. al.}
\newblock {Training {Schrödinger}'s cat: quantum optimal control.}
\newblock {\em The European Physical Journal D 69}, 12 (Dec. 2015), 279.

\bibitem{GKM05}
{\sc Grabowski, J. , Kus,  M. and Marmo,  G.}
\newblock{Geometry of quantum systems: density states and entanglement}
\newblock {\em J. Phys. A 38} (2005).

\bibitem{Grubb1968}
{\sc Grubb, G.}
\newblock A characterization of the non local boundary value problems
  associated with an elliptic operator.
\newblock {\em Annali della Scuola Normale Superiore di Pisa - Classe di
  Scienze 22}, 3 (1968), 425--513.

\bibitem{ibort_quantum_2010}
{\sc Ibort, A.}
\newblock {\it Quantum holonomic computation and quantum boundary control}.
\newblock International {Conference} on {Combinatorics} an {Control}. June 21,
  2010, La Cristalera, Miraflores de la Sierra, Madrid, Spain.

\bibitem{ibort_self-adjoint_2015}
{\sc Ibort, A., LLedó, F., and Pérez-Pardo, J.~M.}
\newblock Self-adjoint extensions of the {Laplace}-{Beltrami} operator and
  unitaries at the boundary.
\newblock {\em Journal of Functional Analysis 268}, 3 (Feb. 2015), 634--670.

\bibitem{IPP09}
{\sc Ibort, A. and Pérez-Pardo, J.~M.}
\newblock Quantum control and representation theory.
\newblock {\em J. Phys. A: Math. Theor. 42}, (2009), 205301.


\bibitem{ibort_self-adjoint_2015b}
{\sc Ibort, A. and Pérez-Pardo, J.~M.}
\newblock On the theory of self-adjoint extensions of symmetric operators and its applications to quantum physics.
\newblock {\em Int. J. Geom. Methods in Modern Phys. 12}, (2015), 1560005.



\bibitem{ibort_boundary_2014}
{\sc Ibort, A., Marmo, G., and Pérez-Pardo, J.~M.}
\newblock Boundary dynamics driven entanglement.
\newblock {\em Journal of Physics A: Mathematical and Theoretical 47}, 38
  (Sept. 2014), 385301.

%\bibitem{kawakami_electrical_2014}
%{\sc Kawakami, E., Scarlino, P., Ward, D.~R., Braakman, F.~R., Savage, D.~E.,
%  Lagally, M.~G., Friesen, M., Coppersmith, S.~N., Eriksson, M.~A., and
%  Vandersypen, L. M.~K.}
%\newblock Electrical control of a long-lived spin qubit in a {Si}/{SiGe}
%  quantum dot.
%\newblock {\em Nature Nanotechnology 9}, 9 (Sept. 2014), 666--670.

\bibitem{khaneja_sub-riemannian_2002}
{\sc Khaneja, N., Glaser, S.~J., and Brockett, R.}
\newblock Sub-{Riemannian} geometry and time optimal control of three spin
  systems: {Quantum} gates and coherence transfer.
\newblock {\em Physical Review A 65}, 3 (Jan. 2002), 032301.

\bibitem{khaneja_optimal_2005}
{\sc Khaneja, N., Reiss, T., Kehlet, C., Schulte-Herbrüggen, T., and Glaser,
  S.~J.}
\newblock Optimal control of coupled spin dynamics: design of {NMR} pulse
  sequences by gradient ascent algorithms.
\newblock {\em Journal of Magnetic Resonance 172}, 2 (Feb. 2005), 296--305.

\bibitem{kisynski_sur_1964}
{\sc Kisyński, J.}
\newblock Sur les opérateurs de {Green} des problèmes de {Cauchy} abstraits.
\newblock {\em Studia Mathematica 3}, 23 (1964), 285--328.

\bibitem{Kochubei1975}
{\sc Kochubei, A.~N.}
\newblock Extensions of symmetric operators and symmetric binary relations.
\newblock {\em Mathematical notes of the Academy of Sciences of the USSR 17}, 1
  (Jan 1975), 25--28.

\bibitem{kostrykin2003quantum}
{\sc Kostrykin, V., and Schrader, R.}
\newblock Quantum wires with magnetic fluxes.
\newblock {\em Communications in mathematical physics 237}, 1-2 (2003),
  161--179.

\bibitem{lions_problemes_1968}
{\sc Lions, J.~L., and Magenes, E.}
\newblock {\em Problèmes aux limites non homogènes et applications}, vol.~1.
\newblock Dunod, 1968.

%\bibitem{malissa_room-temperature_2014}
%{\sc Malissa, H., Kavand, M., Waters, D.~P., Schooten, K. J.~v., Burn, P.~L.,
%  Vardeny, Z.~V., Saam, B., Lupton, J.~M., and Boehme, C.}
%\newblock Room-temperature coupling between electrical current and nuclear
%  spins in {OLEDs}.
%\newblock {\em Science 345}, 6203 (Sept. 2014), 1487--1490.

\bibitem{mirrahimi_controllability_2004}
{\sc Mirrahimi, M., and Rouchon, P.}
\newblock Controllability of {Quantum} {Harmonic} {Oscillators}.
\newblock {\em IEEE Transactions on Automatic Control 49}, 5 (May 2004),
  745--747.

%\bibitem{morello_quantum_2015}
%{\sc Morello, A.}
%\newblock Quantum spintronics: {Single} spins in silicon carbide.
%\newblock {\em Nature Materials 14}, 2 (Feb. 2015), 135--136.

\bibitem{moseley_geometric_2004}
{\sc Moseley, C.~G.}
\newblock Geometric control of quantum spin systems.
\newblock In {\em Quantum {Information} and {Computation} {II}\/} (Aug. 2004),
  vol.~5436, International Society for Optics and Photonics, pp.~319--324.

%\bibitem{neumann_allgemeine_1930}
%{\sc Neumann, J.~v.}
%\newblock Allgemeine {Eigenwerttheorie} {Hermitescher} {Funktionaloperatoren}.
%\newblock {\em Mathematische Annalen 102}, 1 (Dec. 1930), 49--131.

%\bibitem{pla_high-fidelity_2013}
%{\sc Pla, J.~J., Tan, K.~Y., Dehollain, J.~P., Lim, W.~H., Morton, J. J.~L.,
%  Zwanenburg, F.~A., Jamieson, D.~N., Dzurak, A.~S., and Morello, A.}
%\newblock High-fidelity readout and control of a nuclear spin qubit in silicon.
%\newblock {\em Nature 496}, 7445 (Apr. 2013), 334--338.

%\bibitem{post2012spectral}
%{\sc Post, O.}
%\newblock {\em Spectral analysis on graph-like spaces}, vol.~2039.
%\newblock Springer Science \& Business Media, 2012.

%\bibitem{press_complete_2008}
%{\sc Press, D., Ladd, T.~D., Zhang, B., and Yamamoto, Y.}
%\newblock Complete quantum control of a single quantum dot spin using ultrafast
%  optical pulses.
%\newblock {\em Nature 456}, 7219 (Nov. 2008), 218--221.

\bibitem{perez-pardo_boundary_2015}
{\sc Pérez-Pardo, J.~M., Barbero-Liñán, M., and Ibort, A.}
\newblock Boundary dynamics and topology change in quantum mechanics.
\newblock {\em International Journal of Geometric Methods in Modern Physics
  12}, 08 (Sept. 2015), 1560011.
\newblock arXiv: 1501.02826.

\bibitem{ramakrishna_controllability_1995}
{\sc Ramakrishna, V., Salapaka, M.~V., Dahleh, M., Rabitz, H., and Peirce, A.}
\newblock Controllability of molecular systems.
\newblock {\em Physical Review A 51}, 2 (Feb. 1995), 960--966.

\bibitem{reed_methods_1975}
{\sc Reed, M., and Simon, B.}
\newblock {\em Methods of {Modern} {Mathematical} {Physics} {II}: {Fourier}
  {Analysis}, {Self}-{Adjointness}}.
\newblock Elsevier, Nov. 1975.

%\bibitem{reed_methods_2012}
%{\sc Reed, M., and Simon, B.}
%\newblock {\em Methods of {Modern} {Mathematical} {Physics} {I}: {Functional}
%  {Analysis}}.
%\newblock Elsevier, Dec. 2012.

\bibitem{schulte-herbruggen_optimal_2005}
{\sc Schulte-Herbrüggen, T., Spörl, A., Khaneja, N., and Glaser, S.~J.}
\newblock Optimal control-based efficient synthesis of building blocks of
  quantum algorithms: {A} perspective from network complexity towards time
  complexity.
\newblock {\em Physical Review A 72}, 4 (Oct. 2005), 042331.

\bibitem{weedbrook_gaussian_2012}
{\sc Weedbrook, C., Pirandola, S., García-Patrón, R., Cerf, N.~J., Ralph,
  T.~C., Shapiro, J.~H., and Lloyd, S.}
\newblock Gaussian quantum information.
\newblock {\em Reviews of Modern Physics 84}, 2 (May 2012), 621--669.

%\bibitem{wiseman_quantum_2012}
%{\sc Wiseman, H.~M.}
%\newblock Quantum physics: {Cruise} control for a qubit.
%\newblock {\em Nature 490}, 7418 (Oct. 2012), 43--44.



}

\end{thebibliography}
\end{document}